\title{Immediate Observation in Mediated Population Protocols}
\author{Tobias Prehn
\institute{Modelle und Theorie Verteilter Systeme\\
TU Berlin, Germany}
\email{tobias.prehn@tu-berlin.de}
\and
Myron Rotter
\institute{{}\\{TU Berlin, Germany}}
\email{m.rotter@campus.tu-berlin.de}
}
\begin{document}
\maketitle

\begin{abstract}
In this paper we analyze the computational power of variants of population protocols (PP), a formalism for distributed systems with anonymous agents having very limited capabilities.
The capabilities of agents are enhanced in mediated population protocols (MPP) by recording the states in the edges of the interaction graph.
Restricting the interactions to the communication model of immediate observation (IO) reduces the computational power of the resulting formalism.
We show that this enhancement and restriction, when combined, yield a model (IOMPP) at least as powerful as the basic PP.
The proof requires a novel notion of configurations in the MPP model allowing differentiation of agents and uses techniques similar to methods of analyzing encoding criteria, namely operational correspondence.
The constructional part of the proof is generic in a way that all protocols can be translated into the new model without losing the desirable properties they might have besides a stable output.
Furthermore, we illustrate how this approach could be utilized to prove our conjecture of IOMPP model being even as expressive as the MPP model.
If our conjecture holds, this would result in a sharp characterization of the computational power and reveal the nonnecessity of two-way communication in the context of mediated population protocols.
\end{abstract}

\section{Introduction}
    
Population protocols have been introduced in 2004 as a computational model for passively mobile finite state sensors by Angluin \etal \cite{AAD04, AAD06}.
They feature a finite state space, making them suitable for computation units with very limited capabilities and full anonymity, resulting directly from this restriction.
Since the number of possible states that each agent could be in may not grow with the number of participating agents, there is no space for memorizing the ids of already met communication partners or similar constructs.
Therefore, the outcome of any binary communication does not depend on whether the participants have communicated before.
Another feature is the fully distributed approach of the base version for population protocols that does not need a base station, leader, or scheduler of any kind.
The impact of such extensions has been studied \cite{AAF08, DLF17, AGV15}.

It is well known that predicates computable by population protocols are exactly the semilinear predicates.
The first study on the computational power of this model was in 2007 by Angluin \etal \cite{AAE07}.
In this context, also several different communication patterns have been modeled in population protocols and their computational power have been studied as well.
One of those mechanisms has been the immediate observation model, which is a special kind of one-way communication as opposed to the two-way communication that comes with the base model.
The idea is that an agent may observe another agent without it noticing being observed.
Clearly the observed agent cannot change its state in such an interaction whereas the observer can use the information given by its own and the observed agent's state.
In contrast to stronger mechanisms no synchronization between the communication partners is needed.
Consequently the communication in such a model is asynchronous and applicable to a broader variety of systems.
With the fully distributed setting in mind the immediate observation communication seems to be a desirable feature.
But these qualities come with a price.
Protocols with this limitation to the communication can only compute predicates in \COUNT{*}, \ie predicates that count multiplicities of input values and compare them to previously given thresholds.

Another approach to altering population protocols has been the work of Michail \etal \cite{MCS11}.
In their model agents are allowed to store distinct information for different communication partners.
To achieve this they extended the base formalism by states for each pair of agents residing in the edges of the interaction graph.
Some previous work on mediated population protocols modelled directed interaction graphs with one state per edge \cite{MCS11} and others used undirected graphs where each edge has a state for each of its two endpoints \cite{DDF17}.
This extension is a reasonable compromise between maintaining the anonymity of each agent and being able to memorize the already met communication partners.
An agent is capable of telling an agent, that it has not yet communicated with, apart from one it has already met.
But two other agents being in the same state and with the same communication history are still indistinguishable.
Aside from this the edge states can be used for storing several other information.
This mediated population protocols are able to compute all symmetric predicates in $NSPACE(n^2)$.

We now present a model in this paper that combines the extended storage possibilities of mediated population protocols and the limited communication model of immediate observation protocols.
The computational power of our resulting formalism has to be studied as it is unclear how this extension and restriction interact.

In section~\ref{sec:tech-pre} the basic formalisms and existing models are defined.
We are using a representation of population that allows the distinction of agents from a global point of view.
This does not interfere with the anonymity of the agents and is for analysis purposes only.
Based on this we define our model of immediate observation mediated population protocols (IOMPP) in section~\ref{sec:iompp}.
Subsequently we study the computational power of our model in section~\ref{sec:exp-power}.
We take the approach simulating population protocols in immediate observation mediated population protocols in \ref{subsec:simulation}.
Additionally, we give a translation of configurations from one model to the other and define criteria such a translation has to meet for it to express desirable attributes in \ref{subsec:transl-and-criteria}.
Our work is inspired by and makes use of the encodability criteria stated by Gorla in~\cite{G10}.
To the best of our knowledge this technique is novel to population protocols in the way we utilize it in \ref{subsec:iompp-exp-power} to prove that immediate observation mediated population protocols are at least as expressive as the base model of population protocols.
In \ref{subsec:IOMPP-vs-MPP} we conjecture that our approach could also be used to show that immediate observation does not restrict the computational power of mediated population protocols.
We conclude our paper by a discussion on the given results and possible application of the used techniques in section~\ref{sec:conclusion-and-futer-work}.
We also give an outlook on future work and open questions.
\section{Technical preliminaries}\label{sec:tech-pre}
First we introduce populations, which form the base of all population protocols.
They are often modeled as multisets to emphasize the indistinguishability of the participating agents.
We will use vectors where each entry represents the state of a specific agent, because we want to efficiently compare populations in different models from a global viewpoint.
Note that this will not give agents a distinct id they could make use from their local point of view.
A different kind of vector representation can be found in \cite{EGL17} and is not to be confused with ours.
They use vectors where each entry describes for an agent state the multiplicities of agents in that state.
Their vector representation efficiently stores sets of agents with their states, making it easy to identify equivalent protocol states.
In contrast to this, our representation can be used to compare sets of agents in population protocols with sets of agents in extended variants like mediated population protocols.
\\
\begin{definition}[Populations]
Let $A$ be a nonempty finite set and $n \in \NAT$.
Then $A^n$ describes the set of $n$-tuples over $A$ also referred to as vectors of length $n$.
A population over $A$, denoted by $\POP{A}$, is the set of all vectors of arbitrary but finite length over $A$.
If $v \in A^n$ we use $\LENGTH{v} = n$ to denote the length of $v$ and $\ELEM{v}{i}$ to reference the $i^{th}$ element of $v$ (with $i \in \NAT^{+}_{\leq n}$).
\end{definition}
Next we define how calculations in population protocols are modeled from a local point of view.
Each agent has the same set of states and rules how to change states according to a communication partner's state.
Additionally, functions to map input to an initial state and a state to some output are defined to be identical for each agent.
\begin{definition}[Population Protocols \cite{AR09}]
A population protocol $P$ is a 5-tuple $P = (Q, \Sigma, \LOCIN, \LOCOUT, \LOCTRANS)$ where $Q$ is a finite set of agent states, $\Sigma$ a finite input alphabet, $\LOCIN: \Sigma \to Q$ describes the input function, $\LOCOUT: Q \to \{0, 1\}$ is the output function, and $\LOCTRANS: Q^2 \to Q^2$ is referred to as transition function and describes all possible pairwise interactions.
We also making use of a set representation of $\LOCTRANS \subseteq Q^4$ whenever we write $t = (p, q) \to_\LOCTRANS (p', q')$ referring to a specific transition $t \in \LOCTRANS$.
\end{definition}
To analyze protocols we need a global perspective.
Here states are configurations holding state information for each agent.
This kind of global view has been used to study the computational power of population protocols \cite{AAE07}.
\begin{definition}[Global Protocols \cite{AAE07}]
Let $P = (Q, \Sigma, \LOCIN, \LOCOUT, \LOCTRANS)$ be a population protocol.
The global protocol to $P$ is a 5-tuple $G_P = (\CONFIGS, \Sigma, \GLOBIN, \GLOBOUT, \GLOBTRANS)$ where $\CONFIGS = \POP{Q}$  is the set of configurations, \ie, vectors of agent states $Q$, $\GLOBIN: \POP{\Sigma} \to \CONFIGS$ maps input vectors to initial configurations, $\GLOBOUT: \CONFIGS \to \{0, 1, \bot \}$ maps configurations to outputs, and $\GLOBTRANS: \CONFIGS \to \CONFIGS$ is the global transition function with $\GLOBTRANS^{*}$ being its reflexive and transitive closure.
For $C, C' \in \CONFIGS$ it holds that $C \GLOBTRANS C'$ iff there is a transition $t \in \LOCTRANS$ and $i, j \in \NAT^{+}$ with $i \not= j$ such that $t = (\ELEM{C}{i}, \ELEM{C}{j}) \to_\LOCTRANS (\ELEM{C'}{i}, \ELEM{C'}{j})$ and $\ELEM{C}{k} = \ELEM{C'}{k}$ for every $k \in \NAT^{+}\setminus\{i, j\}$.
We also write $C \GLOBTRANSLAB{t_{i,j}} C'$ and call agent $i$ the \emph{initiator} of $t$ and $j$ the \emph{responder} or (if the state of $i$ is not changed by $t$) \emph{observer}.
The global input function takes use of $\LOCIN$ to get an agent state for each single value in its input and $\GLOBOUT$ aggregates the outputs of the agents according to $\LOCOUT$.
It holds that $\GLOBOUT(C) = x \in \{0, 1\}$ iff $\LOCOUT(\ELEM{C}{i}) = x$ for each $i \in \NAT^{+}_{\leq \LENGTH{C}}$ and $\GLOBOUT(C) = \bot$ in every other case.
When the underlying protocol $P$ is clear from the context we often omit the index of $G_P$ and simply state that $G$ is the global protocol to $P$.
\end{definition}

Based on a global protocol we can describe what it means for a protocol to compute some predicate.
For this we need to define executions and fairness.
\begin{definition}[Computation]
Let $G = (\CONFIGS, \Sigma, \GLOBIN, \GLOBOUT, \GLOBTRANS)$ be a global protocol.
A configuration $C \in \CONFIGS$ is output stable with output $x \in \{0, 1\}$ iff $\GLOBOUT(C') = x$ for each $C' \in \CONFIGS$ with $C \GLOBTRANS^{*} C'$.
We call a sequence of configurations $C_0, C_1, C_2, \dots \in \CONFIGS$ with $C_i \GLOBTRANS C_{i+1}$ for each $i \in \NAT$ an execution.
An execution is fair iff for each $C \in \CONFIGS$ with $C_i = C$ for infinitely many $i \in \NAT$ it holds that if there is a transition $C \GLOBTRANS C'$ then also $C_j = C'$ for infinitely many $j \in \NAT$.
A population protocol $P$ is well-specified if for each input $\mathit{Inp}$, it holds that all fair executions of $P$ starting in $\GLOBIN\left(\mathit{Inp}\right)$ reach a configuration that is output stable.
$P$ computes a predicate if this reached configuration is output stable with output $1$ if $\mathit{Inp}$ satisfies the predicate and with output $0$ otherwise.
\end{definition}
In the context of population protocols several communication mechanisms have been studied \cite{AAE07}.
Immediate observation is one of those mechanisms.
It reduces the class of computable predicates to predicates counting multiplicities of input values \COUNT{*}.
 To model this kind of communication, restrictions to the allowed form of transitions are made.
\begin{definition}[Immediate Observation]
Let $P = (Q, \Sigma, \LOCIN, \LOCOUT, \LOCTRANS)$ be a population protocol.
$P$ is an immediate observation protocol, if there is no transition that changes the state of the initiator.
In other words, all transitions $t \in \LOCTRANS$ have to be of the form $t = (p, q) \to_\LOCTRANS (p, q')$.
\end{definition}
Another extension to population protocols is the mediated variant.
The idea is to introduce states in all edges of the communication graph.
Since the most general graph is the complete graph, each pair of agents is given such a state.
In the context of an immediate observation communication mechanism, it is not reasonable to assume a storage that both agents can write to.
Therefore, we introduce a pair of edge states for each pair of agents.
Edge states are always initialized with the same value.
\begin{definition}[Mediated Population Protocols \cite{MCS11}]
A mediated population Protocol $P$ is a 7-tuple $P = (Q, \Sigma, S, s_0, \LOCIN, \LOCOUT, \LOCTRANS)$ where $Q, \Sigma, \LOCIN$ and $\LOCOUT$ are analogous to population protocols. The set of edge states $S$ includes the initial edge state $s_0 \in S$ and the transition function $\LOCTRANS: \left(Q \times S\right)^2 \to \left(Q \times S\right)^2$ incorporates the edge states for each pair of agents.
\end{definition}
Configurations in mediated population protocols cannot be represented by simple vectors.
We need to introduce matrices as configurations containing the agent states on the diagonal and the states of the edge between agents $a$ and $b$ in fields $C_{a,b}$ (side of agent $a$) and $C_{b,a}$ (side of agent $b$).
\begin{definition}[Mediated Populations]
$A^{n \times n}$ describes the set of square matrices over $A$ of size $n \times n$.
A mediated population over $A$ denoted by $\POPM{A}$ is the set of all matrices of arbitrary but finite length over $A$.
If $m \in A^{n \times n}$ we use $\ELEM{m}{i,j}$ to reference the element of $m$ at column $i$ and row $j$ with $i,j \in \NAT^{+}_{\leq n}$ and $\LENGTH{m} = n$ to denote the length as well as the height of a square matrix $m$.
\end{definition}
We can now proceed with lifting our global protocol definitions to represent mediated population protocols as well.
\begin{definition}[Global Protocols for Mediated Population Protocols]
Let $P = (Q, \Sigma, S, s_0, \LOCIN, \LOCOUT, \LOCTRANS)$ be a mediated population protocol.
The global protocol to $P$ is again a 5-tuple $G = (\CONFIGS, \Sigma, \GLOBIN, \GLOBOUT, \GLOBTRANS)$.
In contrast to global protocols for simple population protocols $\CONFIGS = \POPM{Q}$ is the set of configurations and $\GLOBIN: \POP{\Sigma} \to \CONFIGS$ maps input vectors to initial configurations, initializing the diagonal fields with the corresponding agent states and every other field with $s_0$.
The output function $\GLOBOUT: \CONFIGS \to \{0, 1, \bot \}$ ignores all fields not on the diagonal and $\GLOBTRANS: \CONFIGS \to \CONFIGS$ now also changes the respective edge states.
For $C, C' \in \CONFIGS$ it holds that $C \GLOBTRANS C'$ iff there is a transition $t \in \LOCTRANS$ and $i, j \in \NAT^{+}$ with $i \not= j$ such that $t = (\ELEM{C}{i,i}, \ELEM{C}{i,j}, \ELEM{C}{j,j}, \ELEM{C}{j,i}) \to_\LOCTRANS (\ELEM{C'}{i,i}, \ELEM{C'}{i,j}, \ELEM{C'}{j,j}, \ELEM{C'}{j,i})$ and $\ELEM{C'}{k,l} = \ELEM{C'}{k,l}$ for every $k,l \in \NAT^{+}\setminus\{i, j\}$.
\end{definition}
\section{Modelling immediate observation in mediated population protocols}\label{sec:iompp}
From the technical preliminaries in section~\ref{sec:tech-pre} we can easily combine the models for mediated population protocols and immediate observation conform communication.
We get our model of population protocols with two edge states in every edge, one per communication partner, and transitions that keeps the states of the initiator unaltered and changes the states of the observer.
\begin{definition}[Immediate Observation Mediated Population Protocols]
An immediate observation mediated population protocol $P$ is a 7-tuple $P = (Q, \Sigma, S, s_0, \LOCIN, \LOCOUT, \LOCTRANS)$ where $Q$ is a finite set of agent states, $\Sigma$ a finite input alphabet, $\LOCIN: \Sigma \to Q$ describes the input function, $\LOCOUT: Q \to \{0, 1\}$ is the output function, and $\LOCTRANS: \left(Q \times S\right)^2 \to \left(Q \times S\right)^2$ is referred to as transition function and describes all possible pairwise interactions.
We also making use of a set representation of $\LOCTRANS \subseteq \left(Q \times S\right)^4$ whenever we write $t = (p, s, q, r) \to_\LOCTRANS (p', s', q', r')$ referring to a specific transition $t \in \LOCTRANS$.
Since our model uses the immediate observation communication mechanism, all transitions $t = (p, s, q, r) \to_\LOCTRANS (p', s', q', r')$ have to satisfy $p = p'$ and $s = s'$.
\end{definition}
\subsection{Simulating population protocols by immediate observation mediated population protocols}\label{subsec:simulation}
We can now simulate protocols in the basic population protocol model by immediate observation mediated population protocols.
The main idea is to split every two-way communication with an initiator and a responder into 4 steps.
Two steps are required to signal the request and the acknowledgement of a communication and another two steps are needed to finish the communication resolving all pending state changes.
Additionally, a reset transition is given for the case of an unsuccessful communication.
\begin{simulation}\label{sim:PP-to-IOMPP}
Let $P = (Q, \Sigma, \LOCIN, \LOCOUT, \LOCTRANS)$ be a population protocol. The following immediate observation mediated population protocol $P'$ simulates the protocol $P$ and is given by the tuple $(Q',\Sigma',S',s'_0,\LOCIN',\LOCOUT',\LOCTRANS')$ where
\begin{eqnarray*}
    \Sigma' &:=& \Sigma, \\
    Q' &:=& \{L,U\} \times Q, \\
    S' &:=& \{\SINIT, \SPONR\} \cup Q, \\
    s'_0 &:=& \SINIT, \\
    \LOCIN'(\sigma) &:=& (U,\LOCIN(\sigma)) \text{ for all } \sigma \in \Sigma, \\
    \LOCOUT'(l,q) &:=& \LOCOUT(q) \text{ for all } (l,q) \in Q'.
\end{eqnarray*}
The states of the agents are of the form $(l,q)$ where $l\in \{L,U\}$ indicates whether the agent is locked or unlocked and $q \in Q$ is the computation state according to the original population protocol $P$.
For easier referencing we call the first component of an agent state $(l, q)$ the locking state $l$ of this agent and the second component its computation state $q$.
We use the formulation of an agent being locked whenever its locking state is $L$ and say this agent is unlocked otherwise.
W.l.o.g. we assume that $\{\SINIT, \SPONR\} \cap Q = \emptyset$.
The input function $\LOCIN$ maps each input symbol $\sigma \in \Sigma$ to the state $(U,\LOCIN(\sigma))$.
The output function $\LOCOUT'$ maps each state $(l,q) \in Q$ to $\LOCOUT(q)$ independent of the locking indicator $l$.
We specify for each transition $t = (p,q) \rightarrow_{\LOCTRANS} (p',q')$ of $\LOCTRANS$ with $p,q,p',q' \in Q$ the following transitions for $\LOCTRANS'$.
\begin{eqnarray}
	t^{(1)} = ((U,p),\SINIT,(U,q),\SINIT) &\rightarrow_{\LOCTRANS'}& ((U,p),\SINIT,(L,q'),q) \\
	t^{(2)} = ((L,q'),q,(U,p),\SINIT) &\rightarrow_{\LOCTRANS'}& ((L,q'),q,(L,p'),\SPONR) \\
    t^{(3)} = ((L,p'),\SPONR,(L,q'),q) &\rightarrow_{\LOCTRANS'}& ((L,p'),\SPONR,(U,q'),\SINIT) \\
    t^{(4)} = ((x,y),\SINIT,(L,p'),\SPONR) &\rightarrow_{\LOCTRANS'}& ((x,y),\SINIT,(U,p'),\SINIT) \\
    &&\text{ for every } (x,y) \in Q' \nonumber \\
    t^{(5)} = ((x,y),z,(L,q'),q) &\rightarrow_{\LOCTRANS'}& ((x,y),z,(U,q),\SINIT) \\
    &&\text{ for every } (x,y) \in Q' \nonumber \\
    &&\text{ and } z \in S' \setminus \{\SPONR\} \nonumber
\end{eqnarray}
\end{simulation}
The locking state of each agent prohibits simultaneous participation in several different communications.
Whenever an agent took part in a two-way communication $t$ in the original protocol, it could be the observer of a transition of type (1) in the simulation.
A $t^{(1)}$ transition locks the observing agent and puts its old computation state in the edge state this agent controls on the edge with the observed agent.
This has two reasons: First it signals the interest in a communication with the other agent and second it backups the old state for a potential future reset.
If the other agent observes the change in the edge state, it may signal the acknowledgement of requested communication by locking itself, changing its computation state according to the transition $t$ and putting $\SPONR$ in its edge state.
This is achieved by transition $t^{(2)}$.
Now the two agents have to reset their edge states to $\SINIT$ and unlock themselves.
The agent mimicking the responder of the original transition $t$ starts by taking $t^{(3)}$, followed by the simulator of the original initiator taking $t^{(4)}$.
If a communication was not successful, either because the initiators surrogate has taken an other transition with another agent in the meantime or because responder simulating agent observes its partner before it could acknowledge the communication, $t^{(5)}$ is taken.
This transition assures that in the described cases an agent can give up on a communication attempt and reset its state, readying itself for another attempt, potentially with a different partner.

Note that if a transition $t = (p,q) \rightarrow_{\LOCTRANS} (p,q')$ is already immediate observation compliant we do not need to add the whole set of transitions.
We could instead add a slightly altered version of the original transition as follows.
\begin{eqnarray}
	t^{(6)} = ((U,p),\SINIT,(U,q),\SINIT) &\rightarrow_{\LOCTRANS'}& ((U,p),\SINIT,(U,q'),\SINIT)
\end{eqnarray}
Clearly the result would be the same.
This kind of transition is only needed if the simulation needs to be more efficient in the sense of steps needed to get to an output stable configuration.
We will therefore omit this type of transitions in our analyses.
\begin{observation}[Output Changing Transitions]\label{obs:output-changing-transitions}
By definition of $\LOCOUT'$ the output of an agent only depends on its computation state.
As transitions $t^{(3)}$, $t^{(4)}$ do not change the computation states, only transitions $t^{(1)}$, $t^{(2)}$, and $t^{(5)}$ can have an impact on the output of an agent.
Since Simulation~\ref{sim:PP-to-IOMPP} is an immediate observation protocol, this agent has to be the observer of such transitions.
\end{observation}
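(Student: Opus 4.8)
The plan is a direct, bounded case analysis over the finitely many transition schemes $t^{(1)},\dots,t^{(5)}$ (and the optional $t^{(6)}$) introduced in Simulation~\ref{sim:PP-to-IOMPP}. First I would recall that, by the definition $\LOCOUT'(l,q) = \LOCOUT(q)$, the contribution of an agent to $\GLOBOUT$ depends solely on the second component of its state, the computation state; the locking component $l$ is irrelevant to the output. Hence an interaction can alter the output of an agent only if it alters that agent's computation state, and it suffices to track, for each transition type, which of the two participants sees its computation state modified.

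Next I would inspect the schemes one by one. For $t^{(3)}$ the observer moves from $(L,q')$ to $(U,q')$ and the initiator is fixed at $(L,p')$, so no computation state changes at all; likewise for $t^{(4)}$, where the observer goes from $(L,p')$ to $(U,p')$ and the initiator stays $(x,y)$ — in both cases only the locking indicator and an edge state are touched. By contrast, in $t^{(1)}$ the observer's computation state changes from $q$ to $q'$, in $t^{(2)}$ from $p$ to $p'$, and in $t^{(5)}$ from $q'$ back to $q$; these are precisely the cases in which a computation state — and therefore potentially an output — is modified. If the optional scheme $t^{(6)}$ is used it likewise changes only the observer's computation state, from $q$ to $q'$, so including it does not affect the statement. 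Finally, since $P'$ is an immediate observation protocol, every transition fixes the state of its initiator, so whenever an agent's output does change it must be acting as observer; this yields the last sentence of the observation.

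I do not expect a genuine obstacle here: the argument is a mechanical inspection of at most six transition families. The only points requiring a little care are (i) that a change of computation state is necessary but not sufficient for a change of output, so the claim must be phrased with ``can have an impact'' rather than ``has an impact'' (two distinct computation states may share the same $\LOCOUT$ value); and (ii) that the parameterised families — ``for every $(x,y) \in Q'$'' in $t^{(4)}, t^{(5)}$ and ``$z \in S' \setminus \{\SPONR\}$'' in $t^{(5)}$ — are covered uniformly, which they are, since the computation-state bookkeeping above does not depend on the parameters $(x,y)$ or $z$.
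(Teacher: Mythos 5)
Your proposal is correct and follows essentially the same route as the paper: the observation's own text is its justification, namely that $\LOCOUT'$ depends only on the computation state, that a case inspection shows $t^{(3)}$ and $t^{(4)}$ leave computation states unchanged while $t^{(1)}$, $t^{(2)}$, $t^{(5)}$ may change them, and that the immediate observation property forces the affected agent to be the observer. The only nitpick is your remark that including $t^{(6)}$ ``does not affect the statement'' --- it would in fact have to be added to the list of potentially output-changing transitions --- but this is moot since the paper explicitly omits $t^{(6)}$ from its analyses.
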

\begin{observation}[Number of Started Conversations]\label{obs:started-conversations}
Every agent has at most one started and not yet concluded conversation at any point in time.
Starting a conversation by taking transition $t^{(1)}$ as responder brings an agent to a locked state.
Therefore, no other conversation can be started or acknowledged by this agent until the conversation is concluded with transition $t^{(3)}$ or aborted with transition $t^{(5)}$.
Acknowledging a conversation by taking transition $t^{(2)}$ as responder also brings an agent to a locked state.
Again no other conversation can be started or acknowledged by this agent until the conversation is concluded with transition $t^{(4)}$ or aborted with transition $t^{(5)}$.
\end{observation}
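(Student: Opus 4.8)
The plan is to establish the claim as an invariant of the simulation $P'$ that is preserved along every execution, proved by induction on the length of the execution. The invariant I would carry is slightly sharper than the claim itself: for every agent $a$, the number of conversations that $a$ has started or acknowledged but has not yet concluded or aborted equals $1$ if the locking component of $a$'s state is $L$, and equals $0$ if it is $U$. Since that component is a single bit, the observation follows at once, as the count is then always in $\{0,1\}$.

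First I would pin down the bookkeeping that the text leaves informal. An agent \emph{starts} a conversation exactly when it is the observer of a $t^{(1)}$ step, \emph{acknowledges} one exactly when it is the observer of a $t^{(2)}$ step, \emph{concludes} one exactly when it is the observer of a $t^{(3)}$ or a $t^{(4)}$ step, and \emph{aborts} one exactly when it is the observer of a $t^{(5)}$ step (the optional $t^{(6)}$ transitions are omitted from the analysis, as the text allows). An agent may also occur as the initiator, that is the observed agent, of any of these transitions; but since $P'$ is an immediate observation protocol, such a step leaves its state unchanged, and in particular leaves its locking component and its conversation count untouched. Hence in every step at most one agent, namely the observer, can change its count, and it changes by exactly one.

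Next comes the exhaustive, but purely mechanical, inspection of the five transition schemes, recording only their effect on the locking component of the observer: the observer patterns of $t^{(1)}$ and $t^{(2)}$ are $(U,q)$ and $(U,p)$, so these steps are enabled for an agent only while it is unlocked and move it to $L$; the observer patterns of $t^{(3)}$, $t^{(4)}$ and $t^{(5)}$ are $(L,q')$, $(L,p')$ and $(L,q')$, so those are enabled only while it is locked and move it to $U$. Thus, for each agent, every step that starts or acknowledges a conversation is immediately preceded by an unlocked state of that agent, and every step that concludes or aborts one by a locked state. This is the step I expect to require the most care, namely reading off the observer component of each of the five schemes correctly and keeping the initiator and observer roles straight, since the whole argument rests on the locking bit being a faithful one-bit counter of an agent's pending conversations.

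The induction is then short. In the initial configuration every agent is unlocked by definition of $\LOCIN'$ and has no pending conversation, so the invariant holds. For a step $C \GLOBTRANS C'$, only the observer's state changes, so by the case analysis its count moves $0 \to 1$ if it was unlocked, the step then being a $t^{(1)}$ or $t^{(2)}$ step, which also locks it and starts or acknowledges exactly one conversation; and its count moves $1 \to 0$ if it was locked, the step then being a $t^{(3)}$, $t^{(4)}$ or $t^{(5)}$ step, which unlocks it and concludes or aborts exactly one conversation; all other agents and their counts are untouched. Therefore the invariant is preserved, and in particular no agent ever has more than one started-and-not-yet-concluded conversation.
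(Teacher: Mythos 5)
Your proposal is correct and follows essentially the same reasoning the paper gives inline in the observation itself: the locking bit is set exactly by the observer steps of $t^{(1)}$ and $t^{(2)}$ and cleared exactly by those of $t^{(3)}$, $t^{(4)}$, $t^{(5)}$, so it acts as a one-bit counter of pending conversations. Your version merely makes this explicit as an invariant preserved by induction over the execution, which is a faithful formalization rather than a different route.
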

\begin{observation}[Point of no Return]\label{obs:ponr}
Every occurrence of transition $t^{(2)}_{i,j}$ with acting agents $i$ and $j$ is eventually followed by transitions $t^{(3)}_{j,i}$ and $t^{(4)}_{i,j}$.
After execution of $t^{(2)}_{i,j}$ agent $i$ is locked with $q \in Q$ in its edge state to $j$ and agent $j$ is locked with $\SPONR$ in its edge state to $i$.
From Observation~\ref{obs:started-conversations} we know that neither $i$ nor $j$ can be observer of any transition with some agent different from $i$ and $j$.
From the transitions with $i$ and $j$ only $t^{(3)}_{j,i}$ is enabled and will be taken at some point because of the fairness assumptions in population protocols.
After that agent $j$ is still locked with $\SPONR$ in its edge state to $i$.
Therefore, $j$ can only be observer of transitions $t^{(3)}, t^{(4)}$, or $t^{(5)}$.
Again from Observation~\ref{obs:started-conversations} we know that only $t^{(4)}_{i,j}$ is possible.
\end{observation}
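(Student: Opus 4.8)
The plan is to freeze the local neighbourhood of the two agents $i$ and $j$ right after $t^{(2)}_{i,j}$ and run a small case analysis over the transition schemes $t^{(1)}$--$t^{(5)}$, showing that the only transition involving $i$ or $j$ that can ever become enabled is first $t^{(3)}_{j,i}$ and, once that has fired, $t^{(4)}_{i,j}$; fairness together with finiteness of the configuration space then forces both to occur.

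First I would read off the right-hand side of $t^{(2)}$: immediately after $t^{(2)}_{i,j}$ the initiator $i$ has agent state $(L,q')$ with $\ELEM{C}{i,j} = q \in Q$ and the observer $j$ has agent state $(L,p')$ with $\ELEM{C}{j,i} = \SPONR$; by Observation~\ref{obs:started-conversations} neither $i$ nor $j$ has any other open conversation, so $\ELEM{C}{i,k} = \ELEM{C}{j,k} = \SINIT$ for every third agent $k$. Since an IOMPP transition only rewrites the observer's agent state and its outgoing edge state, the four entries $\ELEM{C}{i,i}, \ELEM{C}{j,j}, \ELEM{C}{i,j}, \ELEM{C}{j,i}$ can change only when $i$ or $j$ is the observer. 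I would then check each scheme: $t^{(1)},t^{(2)}$ need an unlocked observer and are out; with $j$ as observer, $t^{(3)}$ and $t^{(5)}$ need $j$'s edge towards the initiator to lie in $Q$, which never holds, and $t^{(4)}$ needs that edge to be $\SPONR$, which singles out $i$ as initiator, i.e.\ $t^{(4)}_{i,j}$, but that scheme also requires $\ELEM{C}{i,j} = \SINIT$ and is hence not yet enabled; with $i$ as observer, $t^{(3)}$ and $t^{(5)}$ need $\ELEM{C}{i,j} \in Q$ (true, $=q$) and look at the initiator's edge $\ELEM{C}{j,i} = \SPONR$, which matches $t^{(3)}$ but is excluded from $t^{(5)}$ by its side condition $z \neq \SPONR$. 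Hence in this configuration, and in every configuration reachable from it before $t^{(3)}_{j,i}$ fires, the four entries stay frozen and $t^{(3)}_{j,i}$ is the unique enabled transition around $i$ and $j$ --- this is exactly the point of no return: the marker $\SPONR$ written by $t^{(2)}$ is what disables the reset transition $t^{(5)}$.

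Next I would invoke fairness. The configuration space for a fixed population is finite, so along any fair execution some configuration $D$ recurs infinitely often, and as long as $t^{(3)}_{j,i}$ has not fired $D$ still carries the four frozen entries, so $D$ enables $t^{(3)}_{j,i}$; thus $D \GLOBTRANS D'$ for the configuration $D'$ reached by firing it, and by fairness $D'$ occurs in the execution. But in $D'$ agent $i$ is unlocked, whereas $i$ stays locked as long as $t^{(3)}_{j,i}$ has not fired --- contradiction, so $t^{(3)}_{j,i}$ is taken. After it, $i$ is in $(U,q')$ with $\ELEM{C}{i,j}=\SINIT$ while $j$ is unchanged, and rerunning the case analysis shows $t^{(4)}_{i,j}$ is now enabled, that $j$ can still only ever be the observer of $t^{(4)}_{i,j}$ (so $\ELEM{C}{j,j}$ and $\ELEM{C}{j,i}$ stay put), and that $\ELEM{C}{i,j}$ could only be rewritten by a $t^{(3)}$-type step with $j$ as initiator, which needs $\ELEM{C}{i,j}\in Q$ and hence never happens; so $t^{(4)}_{i,j}$ stays enabled until taken, and the same finiteness-plus-fairness argument forces it to occur as well.

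The step I expect to be the main obstacle is the bookkeeping of the case analysis: one must make sure that $t^{(3)}_{j,i}$ and $t^{(4)}_{i,j}$ really denote the simulation rules whose left-hand sides match the current agent and edge states --- noting that $t^{(3)}$ does not mention the original initiator's old state and $t^{(4)}$ mentions only its new state, so these rules are well defined even if several original transitions collapse onto them --- and, above all, that third agents cannot interfere, which comes down to the facts that a third-agent step never touches $\ELEM{C}{i,i}, \ELEM{C}{j,j}, \ELEM{C}{i,j}$ or $\ELEM{C}{j,i}$ and that $\ELEM{C}{i,k}, \ELEM{C}{j,k}$ remain $\SINIT$. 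A secondary subtlety is phrasing the fairness step so that it yields progress and not merely enabledness, for which one combines the frozen entries, finiteness of the configuration space, and the recurrence form of fairness.
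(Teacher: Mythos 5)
Your proposal is correct and follows essentially the same route as the paper's own argument: use the locked states and Observation~\ref{obs:started-conversations} to confine $i$ and $j$, a case analysis over the transition schemes to show that only $t^{(3)}_{j,i}$ and then $t^{(4)}_{i,j}$ can ever be enabled with $i$ or $j$ as observer, and fairness to force both to fire. Your version is in fact somewhat more careful than the paper's, notably in making the fairness step precise via recurrent configurations and in checking that after $t^{(3)}_{j,i}$ the now-unlocked agent $i$ cannot, through interactions with third agents, disturb the edge entry that keeps $t^{(4)}_{i,j}$ enabled.
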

\section{Computational power}\label{sec:exp-power}
We now show that our model can compute all predicates computable in population protocols by giving a translation, that relates configurations from a protocol to configurations from its simulation representing the same state of computation.
Additionally, we identify requirements imposed on such a translation to be helpful in proving the equality of computed predicates.
We will ultimately show how this proof is executed.
\subsection{Translation and criteria}\label{subsec:transl-and-criteria}
We provide a translation, that constructs configurations in mediated population protocols from configurations of population protocols by giving all agents an unlocked state and setting all edges to the neutral $\SINIT$ state.
\begin{definition}[Translation of Configurations]\label{def:translation}
Let $P = (Q, \Sigma, \LOCIN, \LOCOUT, \LOCTRANS)$ be a population protocol and $P' = (Q',\Sigma',S',s'_0,\LOCIN',\LOCOUT',\LOCTRANS')$ a immediate observation mediated population protocol constructed from $P$ using Simulation~\ref{sim:PP-to-IOMPP}.
By $\TRANSL{\cdot}: \POP{Q} \to \POPM{Q'}$ we denote the translation of configurations $C \in \POP{Q}$ in the population protocol into configurations $D \in \POPM{Q'}$ from the mediated population protocol.
This translation is defined as follows:
\[\TRANSL{\left( q_0, q_1, \dots, q_n \right)} =
    \begin{pmatrix}
        (U,q_1) & \SINIT & \dots & \SINIT \\
        \SINIT & (U,q_2) & \ddots & \vdots \\
        \vdots & \ddots & \ddots & \SINIT \\
        \SINIT & \dots & \SINIT & (U,q_n)
    \end{pmatrix}
\]
\end{definition}
From the criteria for \emph{good encodings} defined by Gorla \cite{G10} we adopt the notion of operational correspondence.
\begin{definition}[Operational Correspondence]\label{def:operational-correspondence}
A translation $\TRANSL{\cdot}: \POP{Q} \to \POPM{Q'}$ is operationally corresponding if it is
\begin{enumerate}
    \item[(1)] (operationally) complete, \ie, for all $C, C' \in \POP{Q}$ with $C \GLOBTRANS^{*} C'$ it holds that $\TRANSL{C} \GLOBTRANS^{*} \TRANSL{C'}$, and
    \item[(2)] (operationally) sound, \ie, for all $C \in \POP{Q}$ and $D \in \POPM{Q'}$ with $\TRANSL{C} \GLOBTRANS^{*} D$ there exists a $C' \in \POP{Q}$ with $D \GLOBTRANS^{*} \TRANSL{C'}$ and $C \GLOBTRANS^{*} C'$.
\end{enumerate}
\end{definition}
If our translation in Def.~\ref{def:translation} instantiated with concrete population protocol $P$ and mediated population protocol $P'$ is operationally corresponding, we get that every configuration reachable in $P$ is also reachable in $P'$ and vice versa.
\begin{definition}[Input/Output Correspondence]
Let $P$ be a population protocol, $P'$ be a mediated population protocol and $G, G'$ be the global protocols to $P$ and $P'$ respectively. A translation $\TRANSL{\cdot}: \POP{Q} \to \POPM{Q'}$ is I/O corresponding if it is
\begin{enumerate}
    \item[(1)] input corresponding, \ie, for all $V \in \POP{\Sigma}$ it holds that $\TRANSL{\GLOBIN\left(V\right)} = \GLOBIN'\left(V\right)$, and
    \item[(2)] output corresponding, \ie, for all $C \in \POP{Q}$ it holds that $\GLOBOUT\left(C\right) = \GLOBOUT'\left(\TRANSL{C}\right)$.
\end{enumerate}
\end{definition}
Input/Output Correspondence gives us the assurance that input and output functions of the protocols related by a translation behave in a similar way.
If it holds, translating an input configuration of the original protocol or directly using the input function of the corresponding mediated population protocol yields the same result.
Additionally, configurations are always translated into configurations with the same output.
\begin{definition}[Output Stability Preservation]
Let $P$ be a population protocol, $P'$ be a mediated population protocol and $G, G'$ be the global protocols to $P$ and $P'$ respectively. A translation $\TRANSL{\cdot}: \POP{Q} \to \POPM{Q'}$ is output stability preserving if for each  $C \in \POP{Q}$ it holds that $\TRANSL{C}$ is output stable iff $C$ is output stable.
\end{definition}
From the output stability preservation we get that each output stable configuration is translated into a configuration also being output stable.
\begin{lemma}\label{lem:translation-gives-predicate-equality}
Let $P = (Q, \Sigma, \LOCIN, \LOCOUT, \LOCTRANS)$ be a population protocol and $P' = (Q',\Sigma',S',s'_0,\LOCIN',\LOCOUT',\LOCTRANS')$ a mediated population protocol.
If there is a translation $\TRANSL{\cdot}: \POP{Q} \to \POPM{Q'}$ that is operationally corresponding, input/output corresponding, and output stability preserving, then $P$ and $P'$ compute the same predicate.
\end{lemma}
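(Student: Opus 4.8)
The plan is to show that, under the three hypotheses, a fair execution of $P'$ from an input configuration behaves essentially like a fair execution of $P$ from the corresponding input configuration, so that the stable output obtained is the same. First I would fix an arbitrary input $V \in \POP{\Sigma}$ and set $C_0 := \GLOBIN(V)$; by input correspondence we have $\GLOBIN'(V) = \TRANSL{C_0}$, so an execution of $P'$ from the $P'$-input configuration is exactly an execution from $\TRANSL{C_0}$. Since $P$ is assumed well-specified, every fair execution of $P$ from $C_0$ reaches an output stable configuration $C^{*}$, and (computing the predicate) $\GLOBOUT(C^{*})$ is $1$ if $V$ satisfies the predicate and $0$ otherwise. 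By output stability preservation, $\TRANSL{C^{*}}$ is output stable in $P'$, and by output correspondence its output equals $\GLOBOUT(C^{*})$. So the whole task reduces to showing that \emph{every} fair execution of $P'$ from $\TRANSL{C_0}$ reaches some configuration of the form $\TRANSL{C^{*}}$ with $C^{*}$ output stable in $P$ — that is, $P'$ is well-specified and computes the same predicate.

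The two directions of that reduction are carried by the two halves of operational correspondence. For reachability of a stable image: operational completeness gives, for the run $C_0 \GLOBTRANS^{*} C^{*}$ in $P$, a matching run $\TRANSL{C_0} \GLOBTRANS^{*} \TRANSL{C^{*}}$ in $P'$; I would then argue via operational soundness that $\TRANSL{C^{*}}$ is in fact reached (not merely reachable) along any fair execution — the key point being that from any $D$ with $\TRANSL{C_0} \GLOBTRANS^{*} D$ there is, by soundness, a $C'$ with $C_0 \GLOBTRANS^{*} C'$ and $D \GLOBTRANS^{*} \TRANSL{C'}$, so every configuration $P'$ can drift into is one from which a translated (hence ``clean'', by Def.~\ref{def:translation}, unlocked and $\SINIT$-edged) configuration is still reachable, and fairness then forces the execution to eventually pass through such clean configurations infinitely often; combined with the fact that $C^{*}$ is stable and soundness maps any further $P'$-drift back to $P$-configurations reachable from $C^{*}$ (hence with output equal to that of $C^{*}$), the output of the $P'$-run is pinned to $\GLOBOUT(C^{*})$. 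Conversely, to see the output cannot be anything else, I use soundness directly: any $D$ reachable from $\TRANSL{C_0}$ satisfies $D \GLOBTRANS^{*} \TRANSL{C'}$ for some $P$-reachable $C'$; since $C_0 \GLOBTRANS^{*} C' \GLOBTRANS^{*} C^{*}$ by confluence-flavoured well-specification of $P$ (all fair runs from $C'$ also reach output-stable configurations with the same output as the predicate dictates) and output stability of $C^{*}$ means $\GLOBOUT(C'') = \GLOBOUT(C^{*})$ for all $C''$ reachable from it, output correspondence plus output stability preservation give that $\TRANSL{C'}$, and hence $D$ on further evolution, carries the right output.

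I would then close by stating that $P'$ is well-specified (every fair execution from $\GLOBIN'(V)$ reaches $\TRANSL{C^{*}}$, which is output stable) and that the stable output is $1$ exactly when $V$ satisfies the predicate, i.e.\ $P'$ computes the same predicate as $P$.

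The main obstacle is the fairness argument in the forward direction: operational correspondence is stated purely in terms of reachability ($\GLOBTRANS^{*}$), whereas computation is about fair executions, so one must bridge ``a stable image is reachable from anywhere'' to ``a stable image is actually reached''. This requires using soundness to show that the set of $P'$-configurations of the form $\TRANSL{C'}$ with $C'$ reachable in $P$ is, in a suitable sense, an attractor that every fair execution enters, and then that from the translated image of the $P$-stable $C^{*}$ no $P'$-transition can escape to an output $\neq \GLOBOUT(C^{*})$ — which again reduces, via soundness and output correspondence, to output stability of $C^{*}$ in $P$. A secondary subtlety is that $\TRANSL{\cdot}$ need not be injective or surjective, so one must phrase everything in terms of images of reachable configurations rather than identifying configuration spaces; but the three hypotheses are exactly tailored to make this go through without extra assumptions.
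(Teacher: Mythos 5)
Your proof is correct and follows essentially the same route as the paper's: input correspondence aligns the initial configurations, operational correspondence relates the reachable configurations in both directions, and output correspondence together with output stability preservation transfers the stable output and its stability. The paper's own proof is only a four-sentence sketch that silently identifies ``reachable'' with ``reached by every fair execution''; your explicit bridging of that gap --- using soundness plus fairness to show that translated images of $P$-reachable configurations are entered infinitely often, and output stability preservation (rather than soundness alone, which only controls translated configurations) to pin the output from then on --- is the one substantive step the paper omits, and it is the right way to close it.
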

\begin{proof}
Since the translation is input corresponding every initial configuration in $P$ has a corresponding initial configuration in $P'$.
From the operational correspondence we know that a configuration is reachable from an initial configuration in $P'$ iff it has a corresponding configuration reachable from the corresponding initial configuration in $P$.
With output correspondence both configurations clearly have the same output and, since the translation is output stability preserving, the output is either stable in both configurations or in none.
Therefore, the protocol $P$ calculates the same semilinear predicate as $P'$.
\end{proof}
\subsection{All semilinear predicates can be computed by immediate observation mediated population protocols}\label{subsec:iompp-exp-power}
\begin{lemma}\label{lem:operationally-corresponding-translation}
The translation $\TRANSL{\cdot}$ given in Def.~\ref{def:translation} is operationally corresponding for any population protocol $P = (Q, \Sigma, \LOCIN, \LOCOUT, \LOCTRANS)$ and the mediated population protocol $P' = (Q',\Sigma',S',s'_0,\LOCIN',\LOCOUT',\LOCTRANS')$ constructed using Simulation~\ref{sim:PP-to-IOMPP}.
\end{lemma}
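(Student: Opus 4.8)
I would prove Lemma~\ref{lem:operationally-corresponding-translation} by verifying the two clauses of Def.~\ref{def:operational-correspondence} separately. Two facts about $P'$ will be used throughout: every transition acts on exactly two agents and leaves all other agent states and every edge not running between those two agents untouched, and — by the immediate observation restriction — the initiator's agent state and the initiator's edge towards the observer are never modified.

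\emph{Completeness.} I would reduce to a single step: it suffices to show that $C \GLOBTRANS C'$ in $P$ implies $\TRANSL{C} \GLOBTRANS^{*} \TRANSL{C'}$ in $P'$, after which the general case follows by induction on the length of $C \GLOBTRANS^{*} C'$, the empty run being handled by reflexivity of $\GLOBTRANS^{*}$. Fix $t = (p,q)\to_{\LOCTRANS}(p',q')$ realising $C \GLOBTRANS C'$ with initiator $a$ and responder $b$. Starting from $\TRANSL{C}$, where $a$ is $(U,p)$, $b$ is $(U,q)$ and all edges are $\SINIT$, the plan is to fire $t^{(1)}_{a,b}$, then $t^{(2)}_{b,a}$, then $t^{(3)}_{a,b}$, then $t^{(4)}_{b,a}$, checking at each point that the premise of the next transition is met by the configuration just produced. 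Tracking only the two relevant agent states and the two edge states between $a$ and $b$ shows that this run passes through the request/acknowledge/finish/finish stages described after Simulation~\ref{sim:PP-to-IOMPP} and ends with $a$ in $(U,p')$, $b$ in $(U,q')$ and all edges $\SINIT$, i.e.\ in $\TRANSL{C'}$. (For a $t$ that is already immediate observation compliant one could use $t^{(6)}$ instead, but those transitions are omitted per the remark in Section~\ref{subsec:simulation}.) I expect this direction to be routine.

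\emph{Soundness.} Here the work lies in pinning down the configurations reachable from $\TRANSL{C}$. I would prove, by induction on the length of the run $\TRANSL{C} \GLOBTRANS^{*} D$, a structural invariant together with a decoding claim. The invariant should say that the agents of $D$ split into \emph{idle} agents, each in a state $(U,r)$ with all out-edges $\SINIT$; \emph{pending} agents, each locked in $(L,q')$ with its former computation state $q\in Q$ written into its edge towards a partner (all other out-edges $\SINIT$, and the partner's side of that edge not $\SPONR$), where $(p,q)\to_{\LOCTRANS}(p',q')$ is a transition of $P$; and \emph{committed} conversations, each a still-locked initiator $(L,p')$ carrying a $\SPONR$ marker towards a partner that is either still the locked responder $(L,q')$ or has already been released — the only two stages Observations~\ref{obs:started-conversations} and~\ref{obs:ponr} permit strictly between $t^{(2)}$ and completion. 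Alongside it I would carry the decoding $\mathrm{dec}(D)\in\POP{Q}$ that reads every idle or pending agent at its current resp.\ former computation state and every committed agent at its post-$t$ computation state, together with the claim $C \GLOBTRANS^{*} \mathrm{dec}(D)$. The inductive step is a case analysis on the last transition: $t^{(1)}$ turns an idle agent into a pending one and leaves $\mathrm{dec}$ unchanged; $t^{(2)}$ turns a pending agent and its partner into a committed conversation and changes $\mathrm{dec}$ exactly by applying the underlying $P$-transition $t$ with the correct initiator and responder — the only case consuming a step of $\GLOBTRANS$ in $P$; $t^{(3)}$ and $t^{(4)}$ advance a committed conversation without touching $\mathrm{dec}$; and $t^{(5)}$, which (because the stored value lies in $Q$, which determines the partner, whose side of the edge is never $\SPONR$ and so cannot belong to a committed conversation) can fire only on a pending agent, resets it to its former state, again leaving $\mathrm{dec}$ unchanged. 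Given the invariant, soundness follows by garbage collection: put $C' := \mathrm{dec}(D)$, so $C \GLOBTRANS^{*} C'$; then repeatedly advance committed conversations via $t^{(3)}$ or $t^{(4)}$ — always possible by Observation~\ref{obs:ponr}, irrespective of what a released responder is doing elsewhere — and reset pending agents via $t^{(5)}$ taken by their partner — always possible, since no pair is in two conversations at once, so that edge carries no $\SPONR$. A finite measure (twice the number of committed conversations still awaiting $t^{(3)}$, plus the number awaiting only $t^{(4)}$, plus the number of pending agents) strictly decreases at each such step and no new conversation is created, so the procedure terminates in a configuration with no locked agents and all edges $\SINIT$; since $\mathrm{dec}$ was preserved throughout, that configuration is $\TRANSL{\mathrm{dec}(D)} = \TRANSL{C'}$, which gives $D \GLOBTRANS^{*} \TRANSL{C'}$.

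\emph{Expected main obstacle.} The delicate point is the structural invariant, and within it the interaction of released responders with fresh conversations: once a committed conversation reaches its second stage the responder is unlocked again and may re-enter the system, so one must check that this never makes the original conversation's closing $t^{(4)}$ unavailable, and dually that $t^{(5)}$ can never disturb a committed conversation. Both reduce to careful bookkeeping of which agent owns which edge state — exactly what Observations~\ref{obs:started-conversations} and~\ref{obs:ponr} are designed to control — after which the case analysis becomes mechanical.
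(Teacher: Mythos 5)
Your proposal is correct and follows essentially the same route as the paper: completeness by simulating each $P$-step with the four-step sequence $t^{(1)}_{a,b}, t^{(2)}_{b,a}, t^{(3)}_{a,b}, t^{(4)}_{b,a}$, and soundness by decoding $D$ into a $C'$ that rolls pending agents back to the state backed up in their edge and keeps committed/idle agents at their current state, then garbage-collecting $D$ to $\TRANSL{C'}$ via $t^{(3)}, t^{(4)}, t^{(5)}$ and matching each $t^{(2)}$ occurrence to a $P$-transition. Your explicit structural invariant and termination measure just make rigorous what the paper leaves implicit in its case analysis on the edge entries of $D$.
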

\begin{proof}
To prove operational completeness assume that $C, C' \in \POP{Q}$ with $C \GLOBTRANS^* C'$.
Since $\GLOBTRANS^*$ is defined as reflexive-transitive closure of $\GLOBTRANS$ we get that $C_1, C_2, \dots, C_n \in \POP{Q}$ exist with $C \GLOBTRANS C_1 \GLOBTRANS C_2 \GLOBTRANS \dots \GLOBTRANS C_n \GLOBTRANS C'$.
We can always simulate a step $C_i \GLOBTRANS C_{i+1}$ in $P$ by making 4 steps in $P'$.
Assume that the step is due to transition $t \in \LOCTRANS$ and agents at $a$ and $b$ are acting as initiator and responder respectively in $C_i \GLOBTRANSLAB{t_{a,b}} C_{i+1}$.
This can be simulated as $\TRANSL{C_i} \GLOBTRANSLAB{t^{(1)}_{a,b}} C_i^2 \GLOBTRANSLAB{t^{(2)}_{b,a}} C_i^3 \GLOBTRANSLAB{t^{(3)}_{a,b}} C_i^4 \GLOBTRANSLAB{t^{(4)}_{b,a}} \TRANSL{C_{i+1}}$ and thus $\TRANSL{C_i} \GLOBTRANS^4 \TRANSL{C_{i+1}}$ holds.
Thus, $\TRANSL{C} \GLOBTRANS^4 \TRANSL{C_1} \GLOBTRANS^4 \dots \GLOBTRANS^4 \TRANSL{C_n} \GLOBTRANS^4 \TRANSL{C'}$ exemplifies $\TRANSL{C} \GLOBTRANS^* \TRANSL{C'}$.

To prove operational soundness assume that $C \in \POP{Q}$ and $D \in \POPM{Q'}$ with $\TRANSL{C} \GLOBTRANS^* D$.
We construct $C'$ as follows.
For this assume $(D)_{i,i} = (l_i, d_i)$.
\begin{equation*}
    (C')_i =
        \begin{cases}
            g & \text{, if there is exactly one } j \in \NAT^{+}_{\leq \LENGTH{D}, \not= i} \text{ such that } (D)_{i,j} = g \text{ and } (D)_{j,i} \not= \SPONR \\
            d_i & \text{, otherwise}
        \end{cases}
\end{equation*}
We can show that $D \GLOBTRANS^* \TRANSL{C'}$ by taking the appropriate transitions for all $i,j \in \NAT^{+}_{\leq \LENGTH{D}}$ with $i \not= j$ as follows.
{\def\arraystretch{1.5}
\begin{center}
\begin{tabular}{lcll}
    if $(D)_{i,j} = q$ & and & $(D)_{j,i} = \SPONR$ & take transitions $t^{(3)}_{j,i}, t^{(4)}_{i,j}$\\
    if $(D)_{i,j} = \SINIT$ & and & $(D)_{j,i} = \SPONR$ & take transition $t^{(4)}_{i,j}$\\
    if $(D)_{i,j} = q$ & and & $(D)_{j,i} \not= \SPONR$ & take transition $t^{(5)}_{j,i}$\\
    \multicolumn{3}{c}{otherwise} & do nothing
\end{tabular}
\end{center}
}
Since $\TRANSL{C} \GLOBTRANS^* D$ by assumption and $D \GLOBTRANS^* \TRANSL{C'}$ we get that $\TRANSL{C} \GLOBTRANS^* \TRANSL{C'}$.
We can construct $C \GLOBTRANS C_1 \GLOBTRANS C_2 \GLOBTRANS \dots \GLOBTRANS C'$ from the path $\TRANSL{C} \GLOBTRANS D_1 \GLOBTRANS D_2 \GLOBTRANS \dots  \GLOBTRANS \TRANSL{C'}$ as follows.
Whenever there is a step $D_x \GLOBTRANSLAB{t^{(2)}_{i,j}} D_y$ in this path, take transition $C_v \GLOBTRANSLAB{t_{i,j}} C_w$ in the path of $C \GLOBTRANS^* C'$.
By Observation~\ref{obs:started-conversations} we get that each such step of type (2) is eventually followed by transitions of type (3) and (4) and since $\TRANSL{C'}$ has only edge states $\SINIT$, this has to happen before $\TRANSL{C'}$ is reached.
Therefore, there exists a $C' \in \POP{Q}$ with $D \GLOBTRANS^{*} \TRANSL{C'}$ and $C \GLOBTRANS^{*} C'$.
\end{proof}
\begin{lemma}\label{lem:input-output-corresponding-translation}
The translation $\TRANSL{\cdot}$ given in Def.~\ref{def:translation} is I/O corresponding for any population protocol $P = (Q, \Sigma, \LOCIN, \LOCOUT, \LOCTRANS)$ and the mediated population protocol $P' = (Q',\Sigma',S',s'_0,\LOCIN',\LOCOUT',\LOCTRANS')$ constructed using Simulation~\ref{sim:PP-to-IOMPP}.
\end{lemma}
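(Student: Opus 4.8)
The plan is to verify the two defining conditions of I/O correspondence separately, each by a direct unfolding of the relevant definitions; no induction or case analysis on executions is needed here.

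For input correspondence I would fix an arbitrary input vector $V = (\sigma_1, \dots, \sigma_n) \in \POP{\Sigma}$ and compute both sides. On the left, $\GLOBIN(V)$ is the configuration $(\LOCIN(\sigma_1), \dots, \LOCIN(\sigma_n))$ of $P$, and applying $\TRANSL{\cdot}$ from Def.~\ref{def:translation} yields the matrix whose $i^{th}$ diagonal entry is $(U, \LOCIN(\sigma_i))$ and whose off-diagonal entries are all $\SINIT$. On the right, by the definition of the global protocol for mediated population protocols, $\GLOBIN'(V)$ initializes diagonal entry $i$ with the agent state $\LOCIN'(\sigma_i)$ and every other field with $s'_0$. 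Since Simulation~\ref{sim:PP-to-IOMPP} fixes $\LOCIN'(\sigma) = (U, \LOCIN(\sigma))$ and $s'_0 = \SINIT$, the two matrices coincide entrywise, which is exactly $\TRANSL{\GLOBIN(V)} = \GLOBIN'(V)$.

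For output correspondence I would fix $C = (q_1, \dots, q_n) \in \POP{Q}$. By definition $\GLOBOUT(C) = x \in \{0,1\}$ iff $\LOCOUT(q_i) = x$ for every $i$, and $\GLOBOUT(C) = \bot$ in every other case. The matrix $\TRANSL{C}$ has $i^{th}$ diagonal entry $(U, q_i)$, and $\GLOBOUT'$ ignores all fields off the diagonal; hence $\GLOBOUT'(\TRANSL{C}) = x$ iff $\LOCOUT'(U, q_i) = x$ for every $i$, and $\bot$ otherwise. The one point worth spelling out is that $\LOCOUT'$ as defined in Simulation~\ref{sim:PP-to-IOMPP} satisfies $\LOCOUT'(l, q) = \LOCOUT(q)$ independently of the locking component $l$, so $\LOCOUT'(U, q_i) = \LOCOUT(q_i)$ for each $i$; substituting this shows the aggregated outputs agree in all cases, including the $\bot$ branch.

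In short, the lemma reduces entirely to matching the components $Q' = \{L,U\}\times Q$, the presence of $\SINIT$ in $S'$, and the functions $\LOCIN'$, $\LOCOUT'$ of Simulation~\ref{sim:PP-to-IOMPP} against the shapes prescribed by Def.~\ref{def:translation} and the mediated global-protocol definition. There is no genuine obstacle here; the only things to be careful about are not to drop the $\bot$ branch of $\GLOBOUT$ and $\GLOBOUT'$, and to use the fact that $\LOCOUT'$ projects away the locking flag.
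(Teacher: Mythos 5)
Your proposal is correct and follows essentially the same route as the paper's own proof: both parts are handled by directly unfolding Definition~\ref{def:translation}, the definitions of $\LOCIN'$, $s'_0$, and $\LOCOUT'$ from Simulation~\ref{sim:PP-to-IOMPP}, and the global input/output functions. Your explicit treatment of the $\bot$ branch of $\GLOBOUT$ is a small extra care the paper leaves implicit, but the argument is the same.
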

\begin{proof}
The input function $\LOCIN'$ of $P'$ makes use of the input function $\LOCIN$ of $P$ by putting an agent in the unlocked state $(U, \LOCIN(\sigma))$ iff the same agent would be in state $\LOCIN(\sigma)$ in $P$.
Also every edge state is initialized with $\SINIT$.
This matches $\TRANSL{\cdot}$ that translates each state $q$ to $(U, q)$ and sets every edge state to $\SINIT$.
Thus, $\TRANSL{\GLOBIN\left(V\right)} = \GLOBIN'\left(V\right)$ and $\TRANSL{\cdot}$ is input corresponding.

The output function $\LOCOUT'$ of $P'$ makes use of the output function $\LOCOUT$ of $P$ by ignoring the locking state of an agent and giving back the output of $\LOCOUT$ for the computation state.
As seen above, $\TRANSL{\cdot}$ translates every agents state $q$ to a tuple with $q$ being the second component \ie its computation state.
For each agent $a$ it holds that $\LOCOUT\left(\left(C\right)_{a}\right) = \LOCOUT'\left(\left(D\right)_{a,a}\right)$.
As $\GLOBOUT$ aggregates the outputs of all agents, which are the same in $P$ and $P'$, $\GLOBOUT\left(C\right) = \GLOBOUT'\left(\TRANSL{C}\right)$ and consequently $\TRANSL{\cdot}$ is output corresponding.
\end{proof}
\begin{lemma}\label{lem:output-stability-preserving-translation}
The translation $\TRANSL{\cdot}$ given in Def.~\ref{def:translation} is output stability preserving for any population protocol $P = (Q, \Sigma, \LOCIN, \LOCOUT, \LOCTRANS)$ and the mediated population protocol $P' = (Q',\Sigma',S',s'_0,\LOCIN',\LOCOUT',\LOCTRANS')$ constructed using Simulation~\ref{sim:PP-to-IOMPP}.
\end{lemma}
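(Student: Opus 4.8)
Recall that a configuration is output stable precisely when it is output stable with output $x$ for some $x\in\{0,1\}$; I prove both directions of the equivalence with the same $x$. For the direction from $\TRANSL{C}$ to $C$, suppose $\TRANSL{C}$ is output stable with output $x$ and let $C'$ be any configuration with $C\GLOBTRANS^{*}C'$. By operational completeness (Lemma~\ref{lem:operationally-corresponding-translation}) we get $\TRANSL{C}\GLOBTRANS^{*}\TRANSL{C'}$, hence $\GLOBOUT'(\TRANSL{C'})=x$ by output stability of $\TRANSL{C}$, and output correspondence (Lemma~\ref{lem:input-output-corresponding-translation}) turns this into $\GLOBOUT(C')=x$; as $C'$ was arbitrary, $C$ is output stable with output $x$.

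For the converse assume $C$ is output stable with output $x$, i.e.\ $\LOCOUT(\ELEM{\hat C}{a})=x$ for every agent $a$ of every $\hat C$ with $C\GLOBTRANS^{*}\hat C$; I must show $\GLOBOUT'(D)=x$ for every $D$ with $\TRANSL{C}\GLOBTRANS^{*}D$, and by Observation~\ref{obs:output-changing-transitions} it suffices to prove $\LOCOUT(c)=x$ for every computation state $c$ appearing on the diagonal of such a $D$. First I would establish the routine invariant that in every reachable configuration an unlocked agent has all edge states equal to $\SINIT$; combined with Observation~\ref{obs:started-conversations}, this shows that an agent carrying a $Q$-valued edge state must sit in a state $(L,q')$ where $q'$ is the responder result of some transition $t=(p,q)\to_{\LOCTRANS}(p',q')$, with the value on that edge being the backed-up old state $q$. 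Now fix $D$ and an agent $i$ with computation state $c$; operational soundness (Lemma~\ref{lem:operationally-corresponding-translation}) provides a $C''$ with $C\GLOBTRANS^{*}C''$ and $D\GLOBTRANS^{*}\TRANSL{C''}$. If $i$ has no $Q$-valued edge state in $D$ --- by the invariant this covers every unlocked agent as well as every agent in a state $(L,p')$ --- then the witness $C''$ constructed in the proof of Lemma~\ref{lem:operationally-corresponding-translation} satisfies $\ELEM{C''}{i}=c$, so $\LOCOUT(c)=\LOCOUT(\ELEM{C''}{i})=x$.

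The remaining case is that $i$ carries a $Q$-valued edge state, say $\ELEM{D}{i,j}=q$; then $i$ is in the state $(L,q')$ for a transition $t=(p,q)\to_{\LOCTRANS}(p',q')$, so $c=q'$, while $C''$ only records the old state $q$ for $i$ and is therefore useless here. For this case I would backtrack along the path $\TRANSL{C}\GLOBTRANS^{*}D$ to the configuration $D^{\circ}$ immediately before the step $t^{(1)}_{j,i}$ that locked $i$; this step is well defined since, by Observation~\ref{obs:started-conversations}, $i$ has stayed locked ever since, and its precondition forces $i$ to be in $(U,q)$ and $j$ in $(U,p)$ in $D^{\circ}$. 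From $D^{\circ}$ I would drive the whole four-step conversation $t^{(1)}_{j,i},\,t^{(2)}_{i,j},\,t^{(3)}_{j,i},\,t^{(4)}_{i,j}$ to completion --- each step is enabled, as one checks directly against the schemata of Simulation~\ref{sim:PP-to-IOMPP}, exactly as in the completeness argument --- reaching a configuration $E$, still reachable from $\TRANSL{C}$, in which $i$ is unlocked with computation state $q'$. Since $i$ then carries no $Q$-valued edge state, the case already treated, applied to $E$, yields $\LOCOUT(q')=\LOCOUT(c)=x$. Collecting the cases, every diagonal computation state of every reachable $D$ has output $x$, hence $\GLOBOUT'(D)=x$ and $\TRANSL{C}$ is output stable with output $x$.

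I expect the backtracking step to be the main obstacle. A reachable configuration of the simulation can contain ``in-flight'' computation states that are ahead of anything $P$ has actually reached, so operational soundness by itself does not place them in the reachable set of $P$; one genuinely has to use that such a state was produced by a concrete $t^{(1)}$ step whose precondition determines a reachable configuration of $P$ on which the corresponding two-way transition of $P$ is enabled, and then finish that transition inside the simulation to return to the already-understood ``no $Q$-valued edge'' case.
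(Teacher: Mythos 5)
Your proof is correct; the first direction coincides with the paper's, but for the harder direction ($C$ output stable implies $\TRANSL{C}$ output stable) you take a genuinely different route. The paper argues by contradiction: it takes the first configuration $D$ on a path from $\TRANSL{C}$ with a deviating output, isolates via Observation~\ref{obs:output-changing-transitions} a single observer $a$ of a type (1), (2) or (5) transition, maps $D$ to a $P$-reachable witness through the soundness construction for types (1) and (2), and rules out type (5) by minimality of $D$ (the matching earlier type-(1) step would already have changed the output). You instead show directly that every diagonal computation state of every reachable $D$ evaluates to $x$, splitting on whether the agent carries a $Q$-valued backup edge, and for the in-flight states $(L,q')$ created by a $t^{(1)}$ step you backtrack to just before that step and run the full four-step conversation to land in a reachable configuration where $q'$ sits on an unlocked agent, to which the soundness witness then applies. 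This costs an extra (routine, and correctly identified) invariant that unlocked agents have all edges $\SINIT$, but it handles cleanly exactly the point the paper glosses over: the soundness witness aborts a half-started conversation via $t^{(5)}$, so agent $a$ \emph{is} an observer of a type-(5) step on the path from $D$ to $\TRANSL{C'}$ in the type-(1) case and ends up with the backed-up state $q$ rather than $q'$; your backtrack-and-complete argument supplies the missing realization of $q'$ as an agent state reachable in $P$. One minor imprecision: when $\ELEM{D}{j,i} = \SPONR$ the witness already assigns $q'$ to agent $i$ (the construction then takes $t^{(3)}_{j,i}$, which preserves $q'$), so the backtracking is only genuinely needed in the subcase $\ELEM{D}{j,i} \neq \SPONR$ --- harmless, since your argument covers both subcases.
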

\begin{proof}
Let $C$ be a configuration that is output stable in $P$.
Assume towards contradiction $\TRANSL{C}$ is not output stable in $P'$.
From the definition of output stability follows that a configuration $D$ exists with $\GLOBOUT'\left(\TRANSL{C}\right) \not= \GLOBOUT'\left(D\right)$ and $D$ is reachable from $\TRANSL{C}$, \ie there is a path $\TRANSL{C} \GLOBTRANS D_1 \GLOBTRANS D_2 \GLOBTRANS \dots \GLOBTRANS D$.
W.l.o.g assume that $D$ is the first such configuration in this path, \ie $\GLOBOUT'\left(\TRANSL{C}\right) = \GLOBOUT'\left(D_i\right)$ for each such $D_i$.
Since $\GLOBOUT'$ aggregates the values of $\LOCOUT'$ for each agent and because $P'$ is an immediate observation protocol, there has to be a single agent $a$ that has changed its output because of the transition leading to $D$.
From Observation~\ref{obs:output-changing-transitions} we know that this transition has to be of type (1), (2), or (5) and $a$ has to be its observer.

If it is (1) or (2) we can construct a configuration in the same way as $C'$ was constructed in the proof of operational soundness for Lemma~\ref{lem:operationally-corresponding-translation}.
This $C'$ is reachable from $C$ in $P$ and $\TRANSL{C'}$ is reachable from $D$ in $P'$.
Note that on the path from $D$ to $\TRANSL{C'}$ agent $a$ is never an observer of any transition with type (1), (2), or (5) and therefore does not change its output.
Because our translation maintains outputs for each agent $\LOCOUT\left(\left(C\right)_{a}\right) = \LOCOUT'\left(\left(\TRANSL{C}\right)_{a,a}\right) \not= \LOCOUT'\left(\left(D\right)_{a,a}\right) = \LOCOUT'\left(\left(\TRANSL{C'}\right)_{a,a}\right) = \LOCOUT\left(\left(C'\right)_{a}\right)$ holds.
This results in $\GLOBOUT\left(C\right) \not= \GLOBOUT\left(C'\right)$, a contradiction to $C$ being output stable.

If the transition agent $a$ took was of type (5), there has to be a configuration along the path from $\TRANSL{C}$ to $D$ where agent $a$ took a transition of type (1).
Observation~\ref{obs:started-conversations} states that there has to be such a transition in advance and by the definition of $\TRANSL{\cdot}$ this has to be after $\TRANSL{C}$.
But if the transition of type (5) changes the output of $a$, the corresponding type (1) transition must also have changed it, contradicting our assumption of $D$ being the first configuration with an output different from $\GLOBOUT'\left(\TRANSL{C}\right)$.
This is due to a type (5) transition resetting the computation state of an agent back to the state it had before the corresponding type (1) transition.

For the other direction assume towards contradiction $\TRANSL{C}$ is output stable in $P'$ and $C$ is not output stable in $P$.
Then there exists a configuration $\overline{C}$ that is reachable from $C$ with $\GLOBOUT\left(C\right) \not= \GLOBOUT\left(\overline{C}\right)$.
Because $\TRANSL{\cdot}$ is operationally complete by Lemma~\ref{lem:operationally-corresponding-translation} it holds that $\TRANSL{\overline{C}}$ is reachable from $\TRANSL{C}$.
From the output correspondence of $\TRANSL{\cdot}$ in Lemma~\ref{lem:input-output-corresponding-translation} follows that $\GLOBOUT'\left(\TRANSL{C}\right) = \GLOBOUT\left(C\right) \not= \GLOBOUT\left(\overline{C}\right) = \GLOBOUT'\left(\TRANSL{\overline{C}}\right)$.
This is a contradiction to the output stability of $\TRANSL{C}$ in $P'$.
\end{proof}
\begin{theorem}
For any population protocol $P = (Q, \Sigma, \LOCIN, \LOCOUT, \LOCTRANS)$, the immediate observation mediated population protocol $P'$ constructed from $P$ using Simulation~\ref{sim:PP-to-IOMPP} calculates the same semilinear predicate.
\end{theorem}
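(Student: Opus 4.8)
The plan is to obtain the theorem as an immediate consequence of the machinery assembled in Section~\ref{sec:exp-power}. Concretely, I would instantiate the translation $\TRANSL{\cdot}$ of Def.~\ref{def:translation} with the given population protocol $P$ and the immediate observation mediated population protocol $P'$ produced by Simulation~\ref{sim:PP-to-IOMPP}. By Lemma~\ref{lem:operationally-corresponding-translation} this $\TRANSL{\cdot}$ is operationally corresponding, by Lemma~\ref{lem:input-output-corresponding-translation} it is input/output corresponding, and by Lemma~\ref{lem:output-stability-preserving-translation} it is output stability preserving. These are precisely the three hypotheses of Lemma~\ref{lem:translation-gives-predicate-equality}, so that lemma yields that $P$ and $P'$ compute the same predicate, which, since $P$ is a population protocol, is semilinear.

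The one point that deserves more care than this one-line invocation suggests is the transfer of \emph{well-specification}: the assertion that $P'$ computes a predicate presupposes that every fair execution of $P'$ from an initial configuration $\GLOBIN'(V) = \TRANSL{\GLOBIN(V)}$ reaches an output-stable configuration. I would argue this as follows. Fix a fair execution $D_0, D_1, \dots$ of $P'$ with $D_0 = \TRANSL{\GLOBIN(V)}$. The population size is constant along the execution, so only finitely many configurations are reachable from $D_0$; hence some configuration $D$ occurs infinitely often, and by iterating the one-step fairness condition every configuration reachable from $D$ in $P'$ occurs infinitely often. Applying operational soundness (Lemma~\ref{lem:operationally-corresponding-translation}) to $\TRANSL{\GLOBIN(V)} \GLOBTRANS^* D$ gives a $C'$ with $\GLOBIN(V) \GLOBTRANS^* C'$ and $D \GLOBTRANS^* \TRANSL{C'}$. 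Since $P$ is well-specified, from $C'$ some output-stable $C^\ast$ is reachable in $P$; operational completeness then gives $\TRANSL{C'} \GLOBTRANS^* \TRANSL{C^\ast}$, and output stability preservation makes $\TRANSL{C^\ast}$ output-stable in $P'$. Thus $\TRANSL{C^\ast}$ is reachable from $D$, hence occurs infinitely often in the execution, so the execution does reach an output-stable configuration; output correspondence then identifies its output with the value of the predicate computed by $P$ on $V$.

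I expect the main obstacle to be exactly this fairness bookkeeping: bridging the gap between "an output-stable configuration is always reachable" and "a fair execution actually reaches one", and taking care to apply the soundness/completeness lemmas with the correct base configuration ($\GLOBIN(V)$, not an arbitrary $C$). Everything else is a routine chaining of the three already-established properties through Lemma~\ref{lem:translation-gives-predicate-equality}. If one is content to treat Lemma~\ref{lem:translation-gives-predicate-equality} as already encapsulating the full predicate-equality argument (including this well-specification step), the proof collapses to the observation that the $\TRANSL{\cdot}$ of Def.~\ref{def:translation} meets the hypotheses of that lemma by Lemmas~\ref{lem:operationally-corresponding-translation}--\ref{lem:output-stability-preserving-translation}, and is therefore a single invocation.
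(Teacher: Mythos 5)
Your proposal matches the paper's proof exactly: the paper's argument is precisely the one-line invocation of Lemmas~\ref{lem:operationally-corresponding-translation}, \ref{lem:input-output-corresponding-translation}, and \ref{lem:output-stability-preserving-translation} to discharge the hypotheses of Lemma~\ref{lem:translation-gives-predicate-equality}. Your additional paragraph on transferring well-specification via fairness and operational soundness addresses a point the paper leaves implicit inside the (rather informal) proof of Lemma~\ref{lem:translation-gives-predicate-equality}, so it is a useful supplement rather than a deviation.
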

\begin{proof}
We have shown that $\TRANSL{\cdot}$ from Definition~\ref{def:translation} is operationally corresponding, I/O corresponding and output stability preserving in Lemmas~\ref{lem:operationally-corresponding-translation}, \ref{lem:input-output-corresponding-translation}, and \ref{lem:output-stability-preserving-translation}.
By Lemma~\ref{lem:translation-gives-predicate-equality} the statement directly follows.
\end{proof}
%
\begin{corollary}
Immediate observation mediated population protocols can compute every semilinear predicate and are therefore at least as expressive as population protocols.
\end{corollary}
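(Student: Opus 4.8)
The plan is to obtain this statement as a short corollary of the preceding Theorem, combined with the classical characterization of population protocols. The first step is to invoke the well-known result of Angluin \etal~\cite{AAE07} that a predicate is computable by a population protocol if and only if it is semilinear; in particular, for every semilinear predicate $\phi$ there exists a population protocol $P_\phi$ that computes $\phi$.

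The second step is to feed $P_\phi$ into Simulation~\ref{sim:PP-to-IOMPP}, yielding an immediate observation mediated population protocol $P'_\phi$, and to apply the preceding Theorem, which guarantees that $P'_\phi$ computes the same semilinear predicate, namely $\phi$. As $\phi$ ranges over all semilinear predicates, this shows that every semilinear predicate is computed by some IOMPP. The third step is purely definitional: ``at least as expressive as population protocols'' means that every predicate computable by a population protocol is also computable by an IOMPP, and since the former class is exactly the semilinear predicates by the characterization above, the second step already supplies the required IOMPP for each such predicate.

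I do not anticipate a genuine obstacle at this stage, since the entire technical weight --- operational correspondence, I/O correspondence, output stability preservation, and their combination into predicate equality via Lemma~\ref{lem:translation-gives-predicate-equality} --- has been discharged earlier. The one point worth a sentence of care is that the notion of \emph{computing a predicate}, including well-specification of the simulating protocol, is genuinely transported along the simulation; but this is precisely what the preceding Theorem asserts, so the corollary follows with no further argument.
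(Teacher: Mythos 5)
Your proposal is correct and matches the paper's (implicit) argument exactly: the corollary is stated without a separate proof precisely because it follows from the preceding Theorem together with the classical fact that population protocols compute exactly the semilinear predicates. Nothing further is needed.
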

\subsection{Immediate observation does not restrict the computational power of mediated population protocols}\label{subsec:IOMPP-vs-MPP}
The approach in the previous sections can be used to prove that two-way communication in mediated population protocols does not add to the computational power of the model.
To achieve this we define a simulation of mediated population protocols into the variant with immediate observation communication.
\begin{simulation}\label{sim:MPP-to-IOMPP}
Let $P = (Q, \Sigma, S, s_0, \LOCIN, \LOCOUT, \LOCTRANS)$ be a mediated population protocol. The following immediate observation mediated population protocol $P'$ simulates the protocol $P$ and is given by the tuple $(Q', \Sigma', S', s'_0, \LOCIN', \LOCOUT', \LOCTRANS')$ where
\begin{eqnarray*}
    \Sigma' &:=& \Sigma, \\
    Q' &:=& \{L,U\} \times Q, \\
    S' &:=& \left(\{\SINIT, \SPONR\} \cup \left( Q \times S \right)\right) \times S, \\
    s'_0 &:=& \left(\SINIT, s_0\right), \\
    \LOCIN'(\sigma) &:=& (U,\LOCIN(\sigma)) \text{ for all } \sigma \in \Sigma, \\
    \LOCOUT'(l,q) &:=& \LOCOUT(q) \text{ for all } (l,q) \in Q'.
\end{eqnarray*}
W.l.o.g. we assume that $\{\SINIT, \SPONR\} \cap (Q \cup S) = \emptyset$.
In contrast to Simulation~\ref{sim:PP-to-IOMPP} the edge state has two components.
The first component again signals the current state of the simulated communication and serves as a backup for the condition prior to the communication.
Here we need to save both, computation state and edge state.
The second component represents the actual edge state present in the original protocol.
We specify for each transition $t = (p,r,q,s) \rightarrow_{\LOCTRANS} (p',r',q',s')$ of $\LOCTRANS$ with $p,q,p',q' \in Q$ and $r,s,r',s' \in S$ the following transitions for $\LOCTRANS'$.
\begin{eqnarray*}
	t^{(1)} = ((U,p),~(\SINIT, r),~(U,q),~(\SINIT, s)) &\rightarrow_{\LOCTRANS'}& ((U,p),~(\SINIT, r),~(L,q'),~((q,s),s')) \\
	t^{(2)} = ((L,q'),~((q,s),s'),~(U,p),~(\SINIT, r)) &\rightarrow_{\LOCTRANS'}& ((L,q'),~((q,s),s'),~(L,p'),~(\SPONR, r')) \\
    t^{(3)} = ((L,p'),~(\SPONR, r'),~(L,q'),~((q,s),s')) &\rightarrow_{\LOCTRANS'}& ((L,p'),~(\SPONR, r'),~(U,q'),~(\SINIT,s')) \\
    t^{(4)} = ((x,y),~(\SINIT,s'),~(L,p'),~(\SPONR, r')) &\rightarrow_{\LOCTRANS'}& ((x,y),~(\SINIT,s'),~(U,p'),~(\SINIT, r')) \\
    &&\text{ for every } (x,y) \in Q' \nonumber \\
    t^{(5)} = ((x,y),~(v,w),~(L,q'),~((q,s),s')) &\rightarrow_{\LOCTRANS'}& ((x,y),~(v,w),~(U,q),~(\SINIT,s)) \\
    &&\text{ for every } (x,y) \in Q' \nonumber \\
    &&\text{ and } (v,w) \in S' \setminus \{(\SPONR, r')\} \nonumber
\end{eqnarray*}
\end{simulation}
This simulation follows the same ideas as the Simulation~\ref{sim:PP-to-IOMPP}.
The only difference is the edge state of the original protocol that needs to be taken into account by transitions of the simulation and that needs to be backed up for possible future resets.
We can now give a translation similar to Definition~\ref{def:translation} required for our line of argumentation.
\begin{definition}[Translation of Mediated Configurations]\label{def:translation-MPP-to-IOMPP}
Let $P = (Q, \Sigma, S, s_0, \LOCIN, \LOCOUT, \LOCTRANS)$ be a mediated population protocol and $P' = (Q',\Sigma',S',s'_0,\LOCIN',\LOCOUT',\LOCTRANS')$ a immediate observation mediated population protocol constructed from $P$ using Simulation~\ref{sim:MPP-to-IOMPP}.
By $\TRANSL{\cdot}: \POPM{Q} \to \POPM{Q'}$ we denote the translation of configurations $C \in \POPM{Q}$ in the population protocol into configurations $D \in \POPM{Q'}$ from the mediated population protocol.
This translation is defined as follows:
\[\TRANSL{\begin{pmatrix}
            q_1 & s_{2,1} & \dots & s_{n,1} \\
            s_{1,2} & q_2 & \ddots & \vdots \\
            \vdots & \ddots & \ddots & s_{n,n-1} \\
            s_{1,n} & \dots & s_{n-1,n} & q_n
        \end{pmatrix}
    }
    =
    \begin{pmatrix}
        (U,q_1) & \left(\SINIT, s_{2,1}\right) & \dots & \left(\SINIT, s_{n,1}\right) \\
        \left(\SINIT, s_{1,2}\right) & (U,q_2) & \ddots & \vdots \\
        \vdots & \ddots & \ddots & \left(\SINIT, s_{n,n-1}\right) \\
        \left(\SINIT, s_{1,n}\right) & \dots & \left(\SINIT, s_{n-1,n}\right) & (U,q_n)
    \end{pmatrix}
\]
\end{definition}
With this simulation and translation we conjecture that each mediated population protocol can be simulated by a immediate observation mediated population protocol that shares several attributes, especially computing the same predicates.
\begin{conjecture}\label{conj:MPP-vs-IOMPP}
For any mediated population protocol $P = (Q, \Sigma, S, s_0, \LOCIN, \LOCOUT, \LOCTRANS)$, the immediate observation mediated population protocol $P' = (Q',\Sigma',S',s'_0,\LOCIN',\LOCOUT',\LOCTRANS')$ constructed from $P$ using Simulation~\ref{sim:MPP-to-IOMPP} calculates the same semilinear predicate.
\end{conjecture}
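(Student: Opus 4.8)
\medskip
\noindent\emph{Towards a proof of Conjecture~\ref{conj:MPP-vs-IOMPP}.}
The plan is to reuse, almost verbatim, the machinery developed for Simulation~\ref{sim:PP-to-IOMPP}: establish that the translation $\TRANSL{\cdot}$ of Definition~\ref{def:translation-MPP-to-IOMPP} is operationally corresponding, I/O corresponding, and output stability preserving, and then invoke the matrix-domain analogue of Lemma~\ref{lem:translation-gives-predicate-equality}. Neither that lemma nor its proof refers to the vector shape of configurations, so restating it (and the three property definitions) for translations $\TRANSL{\cdot}\colon\POPM{Q}\to\POPM{Q'}$ costs nothing, and Lemma~\ref{lem:input-output-corresponding-translation} carries over literally because $\LOCIN'$ and $\LOCOUT'$ of Simulation~\ref{sim:MPP-to-IOMPP} treat the locking bit and the new second edge component exactly as before. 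The substantive work is therefore the operational correspondence and the output stability preservation.

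I would first re-establish the analogues of Observations~\ref{obs:output-changing-transitions}, \ref{obs:started-conversations}, and~\ref{obs:ponr}. The locking bit behaves identically and $t^{(3)}, t^{(4)}$ still leave computation states untouched, so ``at most one started, not yet concluded conversation per agent'' and ``only $t^{(1)}, t^{(2)}, t^{(5)}$ can change an agent's computation state'' transfer with the same proofs, now merely threading the two-component edge state along. The delicate point is the point of no return: after a step $t^{(2)}_{i,j}$ the agent carrying the backup must be driven through $t^{(3)}$ and $t^{(4)}$ and must never be reset by a $t^{(5)}$. Here the enriched edge states bite. Each $t^{(5)}$ instance is indexed by an original transition $t$, and its guard only excludes the single initiator-side value $(\SPONR,r')$ assembled from that transition's new initiator edge state $r'$; but the value that $t^{(2)}$ actually writes onto the edge is $(\SPONR,\bar r')$ for the unique original transition $\bar t$ that matches the responder's backup together with the initiator's computation state and original edge value, and distinct original transitions can share a responder-side effect while disagreeing on $\bar r'$. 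So a $t^{(5)}$ indexed by such a ``cousin'' of $\bar t$ is not ruled out by its guard and could abort a conversation past the point of no return, which would break the simulation. The clean repair is to tighten the guard of $t^{(5)}$ so that it rejects \emph{every} initiator-side value whose first component is $\SPONR$. I expect checking that this repair (or an equivalent reachability invariant) suffices to be the main obstacle of the whole argument, since completeness, soundness, and stability preservation all rest on the exact dichotomy between conversations that have and those that have not yet passed $t^{(2)}$.

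With the point-of-no-return invariant secured, operational completeness is immediate: an original step $C\GLOBTRANSLAB{t_{a,b}}C'$ (initiator $a$, responder $b$) is matched by $\TRANSL{C}\GLOBTRANSLAB{t^{(1)}_{a,b}}D_2\GLOBTRANSLAB{t^{(2)}_{b,a}}D_3\GLOBTRANSLAB{t^{(3)}_{a,b}}D_4\GLOBTRANSLAB{t^{(4)}_{b,a}}\TRANSL{C'}$, which additionally threads the original edge values $s\mapsto s'$ (deposited at $t^{(1)}$, released at $t^{(3)}$) and $r\mapsto r'$ (deposited at $t^{(2)}$, released at $t^{(4)}$) through the second edge component; composition yields $\TRANSL{C}\GLOBTRANS^{*}\TRANSL{C'}$. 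For operational soundness, given $D$ with $\TRANSL{C}\GLOBTRANS^{*}D$ I would build $C'\in\POPM{Q}$ by ``finishing'' every conversation recorded in $D$: for an ordered pair $(i,j)$ whose entry $(D)_{i,j}$ carries a backup first component $(g,\sigma)$, put $(C')_{i,i}:=g$ and $(C')_{i,j}:=\sigma$ when the first component of $(D)_{j,i}$ is not $\SPONR$ (the conversation is still abortable), and otherwise read off $D$ the post-transition computation state of $i$ and the post-transition value of its edge to $j$; every other entry of $C'$ is the ``idle'' value in $D$ (the diagonal state with its locking bit dropped, the second edge component off the diagonal). Uniqueness of the relevant $j$ per agent comes from the Observation~\ref{obs:started-conversations} analogue. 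Then $D\GLOBTRANS^{*}\TRANSL{C'}$ by taking, edge by edge, the suffix of $t^{(3)},t^{(4)}$ needed to finish or a single $t^{(5)}$ to abort --- the case distinction of the table in the proof of Lemma~\ref{lem:operationally-corresponding-translation}, widened to the second edge component --- and $C\GLOBTRANS^{*}C'$ by replaying, for every $t^{(2)}_{i,j}$ on the chosen path from $\TRANSL{C}$ to $D$, the corresponding original step $t_{i,j}$.

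Output stability preservation then follows the pattern of Lemma~\ref{lem:output-stability-preserving-translation} without new ideas: a first output-changing step in $P'$ must be a $t^{(1)}$, $t^{(2)}$, or $t^{(5)}$ observed by a single agent $a$ (Observation~\ref{obs:output-changing-transitions} analogue); the $t^{(1)}/t^{(2)}$ case is closed by the same reconstruction as for soundness, noting that from the new configuration onward agent $a$ is never again the observer of a $t^{(1)}, t^{(2)}$, or $t^{(5)}$ step before $\TRANSL{C'}$ is reached, and the $t^{(5)}$ case is impossible because a $t^{(5)}$ only restores the computation state backed up (inside the first edge component) by the preceding $t^{(1)}$, so it cannot be the \emph{first} output change; the converse direction uses operational completeness and output correspondence as in the population-protocol case. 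Feeding the three properties into the matrix-domain analogue of Lemma~\ref{lem:translation-gives-predicate-equality} then gives that $P$ and $P'$ compute the same predicate.
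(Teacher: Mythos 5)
The paper gives no proof of this statement: it is deliberately left as a conjecture, and the conclusion defers its proof to future research. So there is nothing to compare your attempt against except the strategy the authors announce, which is precisely the one you adopt --- establish operational correspondence, I/O correspondence, and output stability preservation for the translation of Definition~\ref{def:translation-MPP-to-IOMPP} and feed them into a matrix-domain analogue of Lemma~\ref{lem:translation-gives-predicate-equality}. On the level of approach you are exactly aligned with the authors' plan, and your most valuable contribution is the observation about the guard of $t^{(5)}$ in Simulation~\ref{sim:MPP-to-IOMPP}. As printed, that guard excludes only the single value $(\SPONR, r')$ determined by the transition $t$ being instantiated; since $\LOCTRANS'$ is the union of the instances over all original transitions, an agent locked with backup $((q,s),s')$ can still be reset by $\bar t^{(5)}$ for any $\bar t$ agreeing with $t$ on the responder components but yielding $\bar r'\neq r'$, even after its partner has committed via $t^{(2)}$ and carries $(\SPONR,r')$. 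The committed partner is then generically stranded: its $t^{(4)}$ needs to observe $(\SINIT,s')$ while the reset restored $(\SINIT,s)$, and no other transition pattern matches a locked agent holding $\SPONR$. This breaks the point of no return and with it both directions of operational correspondence, so the simulation as literally written does not work; your repair (excluding every edge value whose first component is $\SPONR$, mirroring the guard $z\in S'\setminus\{\SPONR\}$ of Simulation~\ref{sim:PP-to-IOMPP}) is clearly what was intended and is a genuine catch.

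That said, what you have is a proof plan rather than a proof, and you say so yourself. The two places where real work remains are the ones you flag: re-establishing the analogue of Observation~\ref{obs:ponr} under the repaired guard (including the check that only the actual conversation partner can trigger $t^{(4)}$, which holds because only that partner's edge carries the matching $(\SINIT,s')$ after $t^{(3)}$, but must be argued), and the soundness construction, where well-definedness of your $C'$ rests on the uniqueness of the pending partner per agent and where replaying each $t^{(2)}_{i,j}$ as an original step needs the fact that $\LOCTRANS$ is a function, so the tuple $(p,r,q,s)$ visible at commit time determines the original transition uniquely. Until these are written out in full, the statement remains what the paper says it is: a well-motivated conjecture with a credible, and now slightly corrected, path to a proof.
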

\section{Conclusion and future work}\label{sec:conclusion-and-futer-work}
We have given a proof for the model of immediate observation mediated population protocols to compute all semilinear predicates.
Thus they are as least as powerful in computation as population protocols.
Additionally, we have given arguments why we believe this model is even equivalent to the model of mediated population protocols with two-way communication.
Consequently allowing the initiator of a transition to change its agent and edge states does not contribute to the computational power.
The proof of our Conjecture~\ref{conj:MPP-vs-IOMPP} can hopefully be done in our future research.

Our approach asks for a simulation and a translation which might seem overly complicated for a proof of equal computational power.
But additionally several other attributes, besides the computation of the same predicate, carry over from the one protocol to the other if our Simulation~\ref{sim:PP-to-IOMPP} and translation from Definition~\ref{def:translation} are used.
Consider for example livelock freedom, \ie no configuration is reached that has no successor besides itself.
The simulation can reach a livelock iff the original protocol can reach such a configuration.
This can easily be derived from the operational correspondence in Definition~\ref{def:operational-correspondence}.
In the context of protocols computing some predicate, a livelock is only possible if an output is reached.
Otherwise the requirements for a well-specified protocol are not met.
A livelock can be a desirable state as the computation can clearly be stopped in such a configuration.
If the protocol does something else than computing a predicate, livelocks can be even more important to be reached or avoided, depending on the situation.

Another example is the analysis of a required communication structure.
Whereas some protocols need a full interaction graph to carry out a computation, a path structure would suffice for others to get a correct result.
The interaction graphs supporting a protocol do also support its simulation.

As a last example consider failure resistance \cite{DFG06}.
If a protocol is designed to tolerate a certain number and type of faults, the simulation of this protocol could be capable of a comparable behaviour.
This however depends on the type of failure and the chosen strategy to handle it.
Crash failures, where an agent may leave the population at any time, should be manageable in the simulation with the same mechanisms as the original protocol did.
Message losses could lead to new problems in the simulation like deadlocked communication partners.
Some error handling and error masking strategies could lead to the number of failures tolerable by the simulation being reduced in contrast to the original protocol.

A study on desirable attributes and how they carry over from one protocol to another by our simulation is something we wish to address in the future.

\bibliographystyle{eptcs}
\bibliography{literature}

\begin{thebibliography}{10}
\providecommand{\bibitemdeclare}[2]{}
\providecommand{\surnamestart}{}
\providecommand{\surnameend}{}
\providecommand{\urlprefix}{Available at }
\providecommand{\url}[1]{\texttt{#1}}
\providecommand{\href}[2]{\texttt{#2}}
\providecommand{\urlalt}[2]{\href{#1}{#2}}
\providecommand{\doi}[1]{doi:\urlalt{http://dx.doi.org/#1}{#1}}
\providecommand{\bibinfo}[2]{#2}

\bibitemdeclare{inproceedings}{AGV15}
\bibitem{AGV15}
\bibinfo{author}{Dan \surnamestart Alistarh\surnameend}, \bibinfo{author}{Rati
  \surnamestart Gelashvili\surnameend} \& \bibinfo{author}{Milan \surnamestart
  Vojnovi\'{c}\surnameend} (\bibinfo{year}{2015}): \emph{\bibinfo{title}{Fast
  and Exact Majority in Population Protocols}}.
\newblock In: {\sl \bibinfo{booktitle}{Proceedings of the 2015 ACM Symposium on
  Principles of Distributed Computing}}, \bibinfo{series}{PODC '15},
  \bibinfo{publisher}{ACM}, \bibinfo{address}{New York, NY, USA}, pp.
  \bibinfo{pages}{47--56}, \doi{10.1145/2767386.2767429}.

\bibitemdeclare{inproceedings}{AAD04}
\bibitem{AAD04}
\bibinfo{author}{Dana \surnamestart Angluin\surnameend}, \bibinfo{author}{James
  \surnamestart Aspnes\surnameend}, \bibinfo{author}{Zo{\"{e}} \surnamestart
  Diamadi\surnameend}, \bibinfo{author}{Michael~J. \surnamestart
  Fischer\surnameend} \& \bibinfo{author}{Ren{\'{e}} \surnamestart
  Peralta\surnameend} (\bibinfo{year}{2004}):
  \emph{\bibinfo{title}{{Computation in networks of passively mobile
  finite-state sensors}}}.
\newblock In: {\sl \bibinfo{booktitle}{Proceedings of the twenty-third annual
  ACM symposium on Principles of distributed computing}},
  \bibinfo{organization}{ACM}, pp. \bibinfo{pages}{290--299},
  \doi{10.1145/1011767.1011810}.

\bibitemdeclare{article}{AAD06}
\bibitem{AAD06}
\bibinfo{author}{Dana \surnamestart Angluin\surnameend}, \bibinfo{author}{James
  \surnamestart Aspnes\surnameend}, \bibinfo{author}{Zo{\"e} \surnamestart
  Diamadi\surnameend}, \bibinfo{author}{Michael~J. \surnamestart
  Fischer\surnameend} \& \bibinfo{author}{Ren{\'e} \surnamestart
  Peralta\surnameend} (\bibinfo{year}{2006}): \emph{\bibinfo{title}{Computation
  in networks of passively mobile finite-state sensors}}.
\newblock {\sl \bibinfo{journal}{Distributed Computing}}
  \bibinfo{volume}{18}(\bibinfo{number}{4}), pp. \bibinfo{pages}{235--253},
  \doi{10.1007/s00446-005-0138-3}.

\bibitemdeclare{article}{AAE07}
\bibitem{AAE07}
\bibinfo{author}{Dana \surnamestart Angluin\surnameend}, \bibinfo{author}{James
  \surnamestart Aspnes\surnameend}, \bibinfo{author}{David \surnamestart
  Eisenstat\surnameend} \& \bibinfo{author}{Eric \surnamestart
  Ruppert\surnameend} (\bibinfo{year}{2007}): \emph{\bibinfo{title}{{The
  computational power of population protocols}}}.
\newblock {\sl \bibinfo{journal}{Distributed Computing}}
  \bibinfo{volume}{20}(\bibinfo{number}{4}), pp. \bibinfo{pages}{279--304},
  \doi{10.1007/s00446-007-0040-2}.

\bibitemdeclare{article}{AAF08}
\bibitem{AAF08}
\bibinfo{author}{Dana \surnamestart Angluin\surnameend}, \bibinfo{author}{James
  \surnamestart Aspnes\surnameend}, \bibinfo{author}{Michael~J. \surnamestart
  Fischer\surnameend} \& \bibinfo{author}{Hong \surnamestart Jiang\surnameend}
  (\bibinfo{year}{2008}): \emph{\bibinfo{title}{Self-stabilizing Population
  Protocols}}.
\newblock {\sl \bibinfo{journal}{ACM Trans. Auton. Adapt. Syst.}}
  \bibinfo{volume}{3}(\bibinfo{number}{4}), pp. \bibinfo{pages}{13:1--13:28},
  \doi{10.1145/1452001.1452003}.

\bibitemdeclare{inbook}{AR09}
\bibitem{AR09}
\bibinfo{author}{James \surnamestart Aspnes\surnameend} \&
  \bibinfo{author}{Eric \surnamestart Ruppert\surnameend}
  (\bibinfo{year}{2009}): \emph{\bibinfo{title}{{An Introduction to Population
  Protocols}}}, pp. \bibinfo{pages}{97--120}.
\newblock \bibinfo{publisher}{Springer Berlin Heidelberg},
  \bibinfo{address}{Berlin, Heidelberg}, \doi{10.1007/978-3-540-89707-1_5}.

\bibitemdeclare{inproceedings}{DLF17}
\bibitem{DLF17}
\bibinfo{author}{Shantanu \surnamestart Das\surnameend},
  \bibinfo{author}{Giuseppe~Antonio \surnamestart {Di Luna}\surnameend},
  \bibinfo{author}{Paola \surnamestart Flocchini\surnameend},
  \bibinfo{author}{Nicola \surnamestart Santoro\surnameend} \&
  \bibinfo{author}{Giovanni \surnamestart Viglietta\surnameend}
  (\bibinfo{year}{2017}): \emph{\bibinfo{title}{{Mediated Population Protocols:
  Leader Election and Applications}}}.
\newblock In \bibinfo{editor}{T~V \surnamestart Gopal\surnameend},
  \bibinfo{editor}{J{\"{a}}ger \surnamestart Gerhard\surnameend} \&
  \bibinfo{editor}{Silvia \surnamestart Steila\surnameend}, editors: {\sl
  \bibinfo{booktitle}{Theory and Applications of Models of Computation}},
  \bibinfo{publisher}{Springer International Publishing},
  \bibinfo{address}{Cham}, pp. \bibinfo{pages}{172--186},
  \doi{10.1007/978-3-319-55911-7_13}.

\bibitemdeclare{inproceedings}{DDF17}
\bibitem{DDF17}
\bibinfo{author}{Shantanu \surnamestart Das\surnameend},
  \bibinfo{author}{Giuseppe~Antonio \surnamestart Di~Luna\surnameend},
  \bibinfo{author}{Paola \surnamestart Flocchini\surnameend},
  \bibinfo{author}{Nicola \surnamestart Santoro\surnameend} \&
  \bibinfo{author}{Giovanni \surnamestart Viglietta\surnameend}
  (\bibinfo{year}{2017}): \emph{\bibinfo{title}{Mediated {Population}
  {Protocols}: {Leader} {Election} and {Applications}}}.
\newblock In: {\sl \bibinfo{booktitle}{Theory and {Applications} of {Models} of
  {Computation}}}, \bibinfo{series}{Lecture {Notes} in {Computer} {Science}},
  \bibinfo{publisher}{Springer, Cham}, pp. \bibinfo{pages}{172--186},
  \doi{10.1007/978-3-319-55911-7_13}.

\bibitemdeclare{inproceedings}{DFG06}
\bibitem{DFG06}
\bibinfo{author}{Carole \surnamestart Delporte-Gallet\surnameend},
  \bibinfo{author}{Hugues \surnamestart Fauconnier\surnameend},
  \bibinfo{author}{Rachid \surnamestart Guerraoui\surnameend} \&
  \bibinfo{author}{Eric \surnamestart Ruppert\surnameend}
  (\bibinfo{year}{2006}): \emph{\bibinfo{title}{When Birds Die: Making
  Population Protocols Fault-tolerant}}.
\newblock In: {\sl \bibinfo{booktitle}{Proceedings of the Second IEEE
  International Conference on Distributed Computing in Sensor Systems}},
  \bibinfo{series}{DCOSS'06}, \bibinfo{publisher}{Springer-Verlag},
  \bibinfo{address}{Berlin, Heidelberg}, pp. \bibinfo{pages}{51--66},
  \doi{10.1007/11776178_4}.

\bibitemdeclare{article}{EGL17}
\bibitem{EGL17}
\bibinfo{author}{Javier \surnamestart Esparza\surnameend},
  \bibinfo{author}{Pierre \surnamestart Ganty\surnameend},
  \bibinfo{author}{J{\'{e}}r{\^{o}}me \surnamestart Leroux\surnameend} \&
  \bibinfo{author}{Rupak \surnamestart Majumdar\surnameend}
  (\bibinfo{year}{2017}): \emph{\bibinfo{title}{{Verification of population
  protocols}}}.
\newblock {\sl \bibinfo{journal}{Acta Informatica}}
  \bibinfo{volume}{54}(\bibinfo{number}{2}), pp. \bibinfo{pages}{191--215},
  \doi{10.1007/s00236-016-0272-3}.

\bibitemdeclare{article}{G10}
\bibitem{G10}
\bibinfo{author}{Daniele \surnamestart Gorla\surnameend}
  (\bibinfo{year}{2010}): \emph{\bibinfo{title}{Towards a unified approach to
  encodability and separation results for process calculi}}.
\newblock {\sl \bibinfo{journal}{Information and Computation}}
  \bibinfo{volume}{208}(\bibinfo{number}{9}), pp. \bibinfo{pages}{1031 --
  1053}, \doi{10.1016/j.ic.2010.05.002}.

\bibitemdeclare{article}{MCS11}
\bibitem{MCS11}
\bibinfo{author}{Othon \surnamestart Michail\surnameend},
  \bibinfo{author}{Ioannis \surnamestart Chatzigiannakis\surnameend} \&
  \bibinfo{author}{Paul~G. \surnamestart Spirakis\surnameend}
  (\bibinfo{year}{2011}): \emph{\bibinfo{title}{{Mediated population
  protocols}}}.
\newblock {\sl \bibinfo{journal}{Theoretical Computer Science}}
  \bibinfo{volume}{412}(\bibinfo{number}{22}), pp. \bibinfo{pages}{2434--2450},
  \doi{10.1016/j.tcs.2011.02.003}.

\end{thebibliography}
\end{document}